\newcommand{\mc}[1]{\mathcal{#1}}
\newcommand{\mb}[1]{\mathbb{#1}}
\newcommand{\be}{\begin{equation}}
\newcommand{\ee}{\end{equation}}
\DeclareSymbolFont{bbold}{U}{bbold}{m}{n}
\DeclareSymbolFontAlphabet{\mathbbold}{bbold}
\newcommand{\vect}[1]{\mathbbold{#1}}
\newcommand{\zeros}[1][]{\vect{0}_{#1}}
\newcommand{\ones}[1][]{\vect{1}_{#1}}
\newcommand{\cut}[1]{}
	\renewcommand{\cut}[1]{#1}
\DeclareMathOperator*{\argmax}{arg\,max}
\DeclareMathOperator*{\argmin}{arg\,min}
\definecolor{mygreen}{rgb}{0,0.6,0}
\definecolor{mygray}{rgb}{0.5,0.5,0.5}
\definecolor{mymauve}{rgb}{0.58,0,0.82}
\definecolor{Granata}{rgb}{0.64,0,0} 
\patchcmd{\thebibliography}{\chapter*}{\section*}{}{}
\DeclareSymbolFont{bbold}{U}{bbold}{m}{n}
\DeclareSymbolFontAlphabet{\mathbbold}{bbold}
\DeclarePairedDelimiter\ceil{\lceil}{\rceil}
\newcommand{\subscr}[2]{#1_{\textup{#2}}}
\newcommand{\setdef}[2]{\{#1
	\; | \; #2\}}
\newcommand{\map}[3]{#1: #2 \rightarrow #3}
\newcommand{\intersection}{\ensuremath{\operatorname{\cap}}}
\newcommand{\norm}[2]{\left\|{#1}\right\|_{#2}}
\newcommand{\real}{\mathbb{R}}
\newcommand{\realpositive}{\mathbb{R}_{{>0}}}
\newcommand{\realnonnegative}{\mathbb{R}_{\geq0}}
\newcommand{\pdv}[3][]{\frac{\partial^{#1}{#2}}{\partial{#3}^{#1}}}
\DeclareMathOperator{\diag}{diag}
\newcommand\oprocendsymbol{\hbox{$\triangle$}}
\newcommand\oprocend{\relax\ifmmode\else\unskip\hfill\fi\oprocendsymbol}
\let\OLDthebibliography\thebibliography
\renewcommand\thebibliography[1]{
	\OLDthebibliography{#1}
	\setlength{\parskip}{0pt}
	\setlength{\itemsep}{0pt plus 0.3ex}
}
\newcommand{\until}[1]{\{1,\dots, #1\}}
\newcommand{\vleft}{\subscr{v}{left}}
\newtheorem{theorem}{Theorem}
\newtheorem{assumption}[theorem]{Assumption}
\newtheorem{lemma}[theorem]{Lemma}
\newtheorem{definition}[theorem]{Definition}
\newcommand{\mcA}{\mathcal{A}}
\pgfplotsset{compat=1.14}
\newcommand{\modelone}{the 
ASAP~\eqref{eq:dynamics} with donor-controlled 
work flow~\eqref{eq:work-eigvec}}
\newcommand{\transpose}{\top}
\newcommand{\workOpt}{\bm{w}^{\mathrm{opt}}}
\newcommand{\workOpti}{w^{\mathrm{opt}}}
\begin{document}
\title{Assign and Appraise: Achieving Optimal Performance in Collaborative Teams}
\author{Elizabeth Y. Huang, Dario Paccagnan, Wenjun Mei, and Francesco Bullo, \IEEEmembership{Fellow, IEEE}
	\thanks{Submitted on \today. This work was supported by the U.S. Army Research Laboratory, the U.S. Army Research Office under grant number W911NF-15-1-0577, and the Swiss National Science Foundation under grant number P2EZP2-181618.}
	\thanks{E.Y.H., D.P., and F.B. are with the Center for Control, Dynamical Systems and Computation, UC Santa Barbara, Santa Barbara, CA 93106-5070 USA (email: \{eyhuang, dariop, bullo\}@ucsb.edu).}
	\thanks{W.M. is with the Automatic Control Laboratory, ETH, 8092 Zurich, Switzerland (e-mail: meiwenjunbd@gmail.com).}
	}

\maketitle

\begin{abstract}
Tackling complex team problems requires understanding each team member's skills in order to devise a task assignment maximizing the team performance. This paper proposes a novel quantitative model describing the decentralized process by which individuals in a team learn
who has what abilities, while concurrently assigning tasks
to each of the team members. In the model, the appraisal network
represents team member's evaluations of one another and each team member
chooses their own workload. The appraisals and workload assignment change simultaneously: each member builds their own local appraisal of
neighboring members based on the performance exhibited on previous tasks, while the workload is redistributed based on the current appraisal estimates. We show that the appraisal states can be reduced to a lower dimension due to the presence of conserved quantities associated to the cycles of the appraisal network. Building on this, we provide rigorous results characterizing the ability, or inability, of the team to learn each other's skill and thus converge to an allocation maximizing the team performance. We complement our analysis with extensive numerical experiments.
\end{abstract}
%

\begin{IEEEkeywords}
  Appraisal networks, transactive
  memory systems, coevolutionary networks, evolutionary games.
\end{IEEEkeywords}

\section{Introduction}
Research, technology, and innovation is increasingly reliant on teams of individuals with various specializations and interdisciplinary skill sets.  In its simplest form, a group of individuals completing routine tasks is a resource allocation problem. However, tackling complex problems such as scientific research~\cite{SWJK-WJS-BSB:19}, software development~\cite{SR-RVO:13}, or problem solving~\cite{CH:12} requires consideration of the team structure, cognitive affects, and interdependencies between team members~\cite{ACG-SMF-SG-JA-PWF-FWH:18}.
In these complex scenarios, it is fundamental to discover what skills each member is endowed with, so as to devise a task assignment that maximizes the resulting collective team performance.

\subsection{Problem description}


In this paper, we focus on a quantitative model describing the process by which individuals in a team evaluate one another while concurrently assigning work to each of the team members, in order to maximize the collective team performance (see Figure~\ref{fig:assignappraise}). 
More specifically, we assume each team member is endowed with a skill level (a-priori unknown), and that the team needs to divide a complex task among its members. We let each team member build their own local appraisal of neighboring team members' based on the performance exhibited on previous tasks. Upcoming tasks are then distributed according to the current appraisal estimates. Finally, the performance of each member is newly observed by neighboring members, who, in turn, update their appraisal. Any model satisfying these assumptions is composed of two building blocks: i) an appraisal component modeling how team members update their appraisals (left block in Figure~\ref{fig:assignappraise}), and ii) a work assignment component describing how the task is divided within the team (right block in Figure~\ref{fig:assignappraise}).
\begin{figure}[t]
	\centering
	\includegraphics[width=0.90\linewidth]{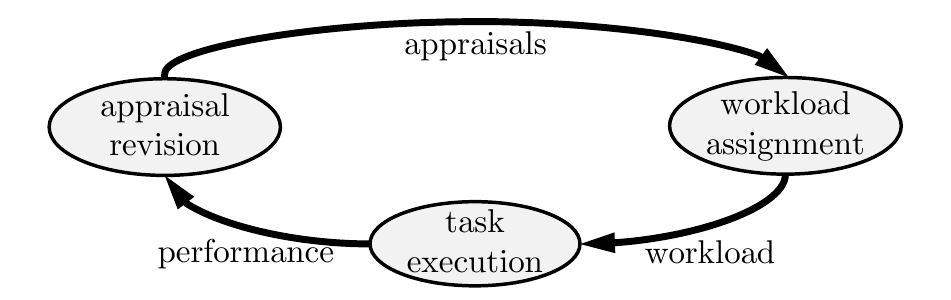}
	\caption{Architectural overview on the assign and appraise model studied in this manuscript. Given a complex task to complete, team members get assigned and execute an initial workload (right and bottom blocks). Each team member revises their appraisal of neighboring members based on each neighbor's individual performance (left), which in turn is used to reassign the workload. The objective is for the team to learn who has what skill, so as to assign tasks in a way that maximizes the collective team performance.}
	\label{fig:assignappraise}
\end{figure}

We model the appraisal process i) through the lens of transactive memory systems, a conceptual model introduced by Wegner~\cite{DMW:87}, which assumes that a team is capable of developing collective knowledge regarding who has what information and capabilities. Our choice of dynamics describing the evolution of the interpersonal appraisals is inspired from replicator dynamics, whereby each team member $i$ updates their appraisal of a neighboring member $j$ proportionally to the difference between member $j$ performance and the (appraisal-weighted) average performance of the team.


We model the work assignment process ii) as a compartmental system~\cite{JAJ-CPS:93}, and utilize two natural dynamics to describe how the task is divided based on the current appraisals. These dynamics correspond to utilizing different centrality measures to subdivide a complex task. It is crucial to observe that the coupling between the appraisal revision and the work assignment process results in a \emph{coevolutionary} network problem.

This paper follows a trend initiated recently, whereby many traditionally qualitative fields such as social psychology and organizational sciences are developing  \emph{quantitative} models. 
In this regard, our aim is to quantify the development of transactive memory within a team and study what conditions cause a team to fail or succeed at allocating a task optimally among members. To do so, we leverage control theoretical tools as well as ideas from evolutionary game theory, and notions from graph theory.

\subsection{Contributions}
Our main contributions are as follows.
\begin{enumerate}[wide, labelwidth=!, nolistsep]
\item We formulate a quantitative model to capture the coevolution of the workload division and appraisal network, where the optimal workload assignment maximizing the collective team performance is an equilibrium of the model. While we let the appraisal network evolve according to a replicator-like dynamics, we consider two different mechanisms for workload division and show well-posedness of the model.

\item Regardless of the mechanism used for workload division, we derive conserved quantities associated to the cycles of the appraisal network. Leveraging this result, for a team of $n$ individuals, we significantly reduce the dimension of the system from $n^2+n$ to a $2n$ dimensional submanifold. 

\item We provide rigorous positive and negative results that characterize the asymptotic behavior for either of the workload division mechanisms. When adopting the first workload division mechanism, we show that under a mild assumption, strongly connected teams are always able to learn each member's correct skill level, and thus determine the optimal workload division. In the second model variation, strong connectivity is insufficient to guarantee that the team learns the optimal workload, but more specific assumptions allow the team to converge to the optimal workload.

\item Finally, we enrich our analysis by means of numerical experiments that provide further insight into the limiting behavior. 
\end{enumerate}

\subsection{Related works}
\subsubsection*{Quantitative models of transactive memory systems}

Wegner's transactive memory systems (TMS) model~\cite{DMW:87} describes how
cognitive states affect the collective performance of a team performing
complex tasks. This widely established model captures both learning on the
individual and collective level, as well as the evolution of the
interaction between individuals within a team.

There are very few quantitative models attempting to describe TMS and most of these models rely on numerical analysis to study the evolution of team knowledge~\cite{JAG-MTB-GK-SWJK-GTC:16}, or what events are disruptive to learning and productivity in groups~\cite{EGA-KL:13}. However, numerical analysis alone has natural limitations, whereas a mathematical perspective to TMS can establish the emergence of learning behaviors for entire classes of models.
Moreover, while our proposed model is agent-based with collective knowledge represented as a weighted digraph, \cite{JAG-MTB-GK-SWJK-GTC:16,EGA-KL:13} are not agent-based models and use a scalar value to encode the team's collective knowledge.

The collective learning model introduced by Mei \emph{et al.}~\cite{WM-NEF-KL-FB:16g} was the first to quantify TMS with appraisal networks and provide convergence analysis.
In particular, for the \emph{assign/appraise} model in~\cite{WM-NEF-KL-FB:16g}, the appraisal update protocol is akin to one originally introduced in~\cite{NEF:11} and assumes each team member only updates their own appraisal based on performance comparisons. Additionally, the workload assignment is a centralized process determined by the eigenvector centrality of the network~\cite{PB:89}. 
Our model significantly differs from~\cite{WM-NEF-KL-FB:16g} in that team members update their own and neighboring team members' appraisals. Additionally, the workload assignment is a distributed and dynamic process.

%

\subsubsection*{Distributed optimization}
Our model has direct ties with the field of distributed optimization.
Under suitable conditions discussed later, in fact, the team will be able
to learn each other's skill levels, and thus agree on a work assignment
maximizing the collective performance in a distributed
fashion. Additionally, any change in the problem dimension, due to the
addition or subtraction of agents, only requires local adaptions. In light
of this observation, one could reinterpret the assign and appraisal model
studied here as a distributed optimization algorithm, where the objective
is that of maximizing the team performance through local communication.  In
comparison to our work, existing distributed optimization algorithms often
require more complex dynamics. For example, \cite{AN-AO-PAP:10} requires
that the optimal solution estimates are projected back into the constrained
set, while Newton-like methods~\cite{EW-AO-AJ:13} require higher order
information.

Perhaps closest to this perspective on our problem is the work of Barreiro-Gomez \emph{et al.}~\cite{JBG-GO-NQ:16}, where evolutionary game theory is used to design distributed optimization algorithms. Nevertheless, we observe that the objective we pursue here is that of quantifying if and to what extent team members learn how to share a task optimally. In this respect, the dynamics we consider do not arise as the result of a design choice (as it is in~\cite{JBG-GO-NQ:16}), but they are rather defining the problem itself.

\subsubsection*{Adaptive coevolutionary networks}
Our model is an example of appraisal network  coevolving with a resource allocation process. Research regarding adaptive networks has gained traction in recent decades, appearing in biological systems and game theoretical applications~\cite{TG-BB:08}. Wang \emph{et al.}~\cite{ZW-MAA-ZXW-LW-CTB:15}, for example, review coupled disease-behavior dynamics, while Ogura \emph{et al.}~\cite{MO-VMP:16} propose an epidemic model where awareness causes individuals to distance themselves from infected neighbors. Finally, we note that coevolutionary game theory considers dynamics on the population strategies and dynamics of the environment, where the payoff matrix evolves with the environment state~\cite{JSW-CE-KP-SPB-WCR:16,LG-JG-MC:18}.

\subsection{Paper organization}
Section~\ref{sec:framework} contains the problem framework, model definition, the model's well-posedness, and equilibrium corresponding to the optimal workload. Section~\ref{sec:appraisal-properties} contains the properties of the appraisal dynamics and reduced order dynamics. Section~\ref{sec:convergence-eigvec} and~\ref{sec:convergence-degree} present the convergence results for the model with both workload division mechanisms.
Section~\ref{sec:fail2learn} contains numerical studies illustrating the various cases of asymptotic behavior.

\subsection{Notation}\label{sec:notation}
Let $\ones[n]$ ($\zeros[n]$ resp.) denote the $n$-dimensional column vector
with all ones (zero resp.).  Let $I_n$ represent the $n\times n$ identity
matrix.  For a matrix or vector $B\in\real^{n\times m}$, let $B\geq 0$ and $B>0$
denote component-wise inequalities. Given $x=[x_1,\dots,x_n]^\top \in
\real^{n}$, let $\diag(x)$ denote the $n\times n$ diagonal matrix such that
the $i$th entry on the diagonal equals $x_i$.  Let $\odot$ ($\oslash$
resp.) denote Hadamard entrywise multiplication (division resp.) between two
matrices of the same dimensions.
For $x,y\in\real^{n}$ and $B\in\real^{n\times n}$, we shall use the property
\begin{equation}\label{prop:Hadamard-rank1}
  xy^\transpose \odot B = \diag(x)B\diag(y).
\end{equation}
Define the $n$-dimensional simplex as $\Delta_n =
\setdef{x\in\real^{n}}{\ones[n]^\top x=1,x \geq 0}$ and the relative interior of the
simplex as $\mathrm{int}(\Delta_n) =
\setdef{x\in\mb{R}^{n}}{\ones[n]^{\top}x=1,x > 0}$.

A nonnegative matrix $B\geq 0$ is row-stochastic if $B\ones[n]=\ones[n]$.
For a nonnegative matrix $B$, $G(B)$ is the weighted digraph
associated to $B$, with node set $\{1,\dots,n\}$ and directed edge $(i,j)$
from node $i$ to $j$ if and only if $b_{ij}>0$.  A nonnegative matrix $B$
is irreducible if its associated digraph is strongly connected.  The
Laplacian matrix of a nonnegative matrix $B$ is defined as
$L(B)=\diag(B\ones[n])-B$.  For $B$ irreducible and row-stochastic,
$\subscr{v}{left}(B)$ denotes the left dominant eigenvector of $B$, i.e.,
the entry-wise positive left eigenvector normalized to have unit sum and
associated with the dominant eigenvalue of $B$~\cite[Perron Frobenius theorem]{FB:20}.

\section{Problem Framework and ASAP Model}\label{sec:framework}
In this section, we first propose the Assignment and Appraisal (ASAP) model
and establish that it is well-posed for finite time. The proposed ASAP
model can be considered a socio-inspired, distributed, and online algorithm
for optimal resource allocation problems. Our model captures two
fundamental processes within teams: workload distribution and transactive
memory. We consider two distributed, dynamic models for the workload
division: a compartmental system model and a linear model that uses
average-appraisal as the input for adjusting workload. The transactive
memory is quantified by the appraisal network and reflects individualized
peer evaluation in the team.  The development of the transactive memory
system allows the team to estimate the work assignment that maximizes the
collective team performance.

\subsection{Workload assignment, performance observation, and appraisal network}
\subsubsection*{Workload assignment}
We consider a team of $n$ individuals performing 
a sequence 
of tasks. 
Let $\bm{w} = [w_1,\dots,w_n]^\top \in 
\mathrm{int}(\Delta_n)$ denote the vector of 
\emph{workload 
assignments} for a given task, where $w_i$ is 
the work 
assignment of individual $i$. 

\subsubsection*{Individual performance}
Let $\map{p(\bm w)}{\mathrm{int}(\Delta_n)}{\realpositive^{n}}$ 
represent the vector of \emph{individual performances} that change as a
function of the work assignment, where $p(\bm w) = [p_1(w_1),\dots,p_n(w_n)]^\top \in $ and $p_i(w_i)$ is the performance of
individual $i$. In general, individuals will perform better if they have
less workload; we formalize this notion with the following two
assumptions. 
\begin{assumption}(Smooth and strictly decreasing performance functions)
	\label{ass:performance-smooth}
	Assume function
	$\map{p_i}{(0,1]}{[0,\infty)}$ is $C^1$, 
	strictly decreasing,
	convex, integrable, and $\lim_{x\to 0^+}p_i(x)=+\infty$.
\end{assumption}

\begin{assumption}(Power law performance functions)
	\label{ass:performance-2}
	Assume function
	$\map{p_i}{(0,1]}{[0,\infty)}$ is of the form $p_i(x) = s_ix^{-\gamma}$ where $s_i>0$ and $\gamma\in(0,1)$.
\end{assumption}

The first assumption is quite general and can be further
weakened at the cost of additional notation. The second assumption is
more restrictive than Assumption~\ref{ass:performance-smooth}, but is
well-motivated by the power law for individual
learning~\cite{AN-PSR:81}. Note that functions obeying
Assumption~\ref{ass:performance-2} also satisfy
Assumption~\ref{ass:performance-smooth}.

\subsubsection*{Appraisal network}
Let $A=\{a_{ij}\}_{i,j \in\{1,\dots,n\}}$ 
denote the $n\times n$ nonnegative, row-stochastic 
\emph{appraisal matrix}, where 
$a_{ij}$ is individual $i$'s appraisal 
of individual $j$.  
The appraisal matrix represents the team's 
network structure and transactive memory system. 

\subsection{Model description and problem statement}
In this work, we design a model where the workload assignment coevolves with the appraisals: the 
workload assignment changes as a function of 
the appraisals and the appraisals update based 
on perceived performance disparities for the assigned workload.
Suppose at each time $t$, the team has a workload assignment $\bm w(t)$, individual performances $p(\bm w(t))$, and appraisal matrix $A(t)$. Since we are studying teams, it is reasonable to assume the appraisal network is strongly connected and each individual appraises themself. This translates to an irreducible initial appraisal matrix $A(0)$ with strictly positive self-appraisals $a_{ii}(0)>0$ for all $i\in\{1,\dots,n\}$. All members also start with strictly positive workload $\bm w(0) \in \mathrm{int}(\Delta_n)$. For shorthand throughout the rest of the paper, we use $A_0 = A(0)$ and $\bm w_0 = \bm w(0)$.

Before introducing the model, first we define the work flow function $F=[F_1(A,\bm w),\dots, F_n(A,\bm w)]^\top$,
where $\map{F_i}{[0,1]^{n \times n}\times \Delta_n}{\Delta_n}$ describes how individual $i$ adjusts their own work assignment. 
Then our coevolving assignment and appraisal process is quantified by the following 
dynamical system.

\begin{definition}[ASAP (assignment and appraisal) model]
  \label{mod:assign-appraise}
  Consider $n$ performance functions $p_i$ satisfying Assumption~\ref{ass:performance-smooth} or~\ref{ass:performance-2}.
  The coevolution of the appraisal network $A(t)$  and workload assignment
  $\bm{w}(t)$ obey the following coupled
  dynamics,
  \begin{equation}
    \label{eq:dynamics}
    \begin{split}
      \dot{a}_{ij} &= a_{ij} \Big( p_j(w_j) - 
      \sum_{k=1}^{n}a_{ik}p_k(w_k) \Big),
      \\
      \dot{w}_i &= F_i(A,\bm w),
    \end{split} 
	\end{equation}
  which reads in matrix form
  \begin{equation}\label{eq:dynamics-centrality-matrixform}
    \begin{split}
      \dot{A} &= A\odot \Big( 
      \ones[n] p(\bm{w})^{\top} - 
      A p(\bm{w})\ones[n]^{\top}
      \Big), 
      \\
      \dot{\bm{w}} &= F(A,\bm w).
    \end{split}
  \end{equation}
  The work flow function $F$ obeys one of the following work flow
  models:
  \begin{align}
    \label{eq:work-eigvec}
    \text{Donor-controlled:}\quad 
    & F_i(A,\bm w) = -w_i + 
    \sum_{k=1}^{n}a_{ki}w_k,
    \\
    \label{eq:work-degree}
    \text{Average-appraisal:}\quad 
    & F_i(A,\bm w) = -w_i + \frac{1}{n}\sum_{k=1}^{n}a_{ki}.
  \end{align}
  
  The matrix forms of the donor-controlled~\eqref{eq:work-eigvec} and
  average-appraisal~\eqref{eq:work-degree} work flows are $F(A,\bm w) =
  -\bm w + A^{\top} \bm w$ and $F(A,\bm w) = -\bm w + \frac{1}{n}A^{\top}
  \ones[n]$, respectively.
\end{definition}

The appraisal weights of the ASAP model~\eqref{eq:dynamics} update based on performance feedback between neighboring individuals.
For neighboring team members $i$ and $j$, $i$ will increase their appraisal of $j$ if $j$'s performance is larger than the weighted average performance observed by $i$, i.e. $p_j(w_j)> \sum_{k=1}^{n}a_{ik}p_k(w_k)$. Individual $i$ also updates their self-appraisal with the same mechanism.
The irreducibility and strictly positive self-appraisal assumptions on the appraisal 
network means that every individual's performance is evaluated by themself and at least one other individual within the team.

The donor-controlled work 
flow~\eqref{eq:work-eigvec} 
models a team where individuals 
exchange portions of their workload assignment 
with their neighbors, and the amount of work exchanged depends on their current work assignments and the appraisal values. The work individual $j$ gives to 
individual $i$ has flow rate $a_{ji}$ and is 
proportional to $w_j$. 
The average-appraisal work flow~\eqref{eq:work-degree} assumes that each individual collects feedback from neighboring team members through appraisal evaluations. Each individual uses this feedback to calculate their average-appraisal $\frac{1}{n}\sum_{k=1}^{n}a_{ki}$, which is then used to adjust their own workload assignment. The average-appraisal is equivalent to the degree centrality of the appraisal network. Note that while the donor-controlled work flow is decentralized and distributed, the average-appraisal work flow is only distributed since it requires individuals to know the total number of team members.

%


In the following lemma, we show that the ASAP 
model is well-posed and the appraisal network 
maintains the same network topology for finite 
time.

\begin{lemma}[Finite-time properties for the ASAP model]\label{thm:finiteTimeProps}
  Consider the ASAP model~\ref{eq:dynamics} with donor
  controlled~\eqref{eq:work-eigvec} or average
  appraisal~\eqref{eq:work-degree} work flow.  Assume $A_{0}$ is
  row-stochastic, irreducible, with strictly positive diagonal and $\bm
  w_0\in\mathrm{int}(\Delta_n)$.  Then for any finite $\Delta t>0$, the
  following statements hold:
  \begin{enumerate}
  	\item\label{thm:finiteTimeProps:w} $\bm w(t)\in\mathrm{int}(\Delta_n)$
  	for $t\in[0,\Delta t]$;
  	
	 	\item\label{thm:finiteTimeProps:A} $A(t)$ remains row-stochastic with the
    same zero/positive pattern for  $t\in[0,\Delta t]$.
  \end{enumerate}
\end{lemma}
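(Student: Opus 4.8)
The plan is to run a continuation argument: first establish a unique local solution, then derive a priori invariants on the maximal interval of existence, and finally use these invariants to show the solution cannot escape the region of smoothness before time $\Delta t$. Since each $p_i$ is $C^1$ on $(0,1]$ (Assumption~\ref{ass:performance-smooth}), the right-hand side of~\eqref{eq:dynamics-centrality-matrixform} is locally Lipschitz on the open set $U = \setdef{(A,\bm w)}{\bm w \in \mathrm{int}(\Delta_n)}$, where the performance functions and hence $F$ are smooth. By Picard--Lindel\"of there is a unique maximal solution on some interval $[0,T_{\max})$ with $(A(t),\bm w(t)) \in U$, so the goal reduces to showing $T_{\max} > \Delta t$ together with the stated invariants.

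I would prove statement~\ref{thm:finiteTimeProps:A} first, as it feeds into statement~\ref{thm:finiteTimeProps:w}. For the zero/positive pattern, observe that each scalar equation has the multiplicative form $\dot a_{ij} = a_{ij}\,g_{ij}(t)$ with $g_{ij}(t) = p_j(w_j(t)) - \sum_{k} a_{ik}(t)p_k(w_k(t))$. On $[0,T_{\max})$ the trajectory lies in $U$, so $g_{ij}$ is continuous and the scalar linear ODE integrates to $a_{ij}(t) = a_{ij}(0)\exp\!\big(\int_0^t g_{ij}(\tau)\,d\tau\big)$. Since the exponential factor is strictly positive, $a_{ij}(t)$ has the same sign as $a_{ij}(0)$ for all $t$, preserving the zero/positive pattern. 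For row-stochasticity, let $r_i = \sum_j a_{ij}$ and $\beta_i = \sum_k a_{ik}p_k(w_k)$; summing the appraisal dynamics over $j$ gives $\dot r_i = \beta_i(1-r_i)$. As $r_i \equiv 1$ solves this scalar linear ODE with $r_i(0)=1$, uniqueness yields $r_i(t)=1$ on $[0,T_{\max})$, i.e.\ $A(t)\ones[n]=\ones[n]$.

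For statement~\ref{thm:finiteTimeProps:w}, I would separately handle membership in the affine hull $\ones[n]^\transpose \bm w = 1$ and strict positivity. Using $A\ones[n]=\ones[n]$ just established, the donor-controlled flow gives $\frac{d}{dt}\ones[n]^\transpose\bm w = \ones[n]^\transpose(-\bm w + A^\transpose\bm w) = 0$, while the average-appraisal flow gives $\frac{d}{dt}\ones[n]^\transpose\bm w = 1 - \ones[n]^\transpose\bm w$; in either case $\ones[n]^\transpose\bm w \equiv 1$. For positivity, the key estimate is $\dot w_i \ge -w_i$: in the average-appraisal case the inflow $\tfrac1n\sum_k a_{ki}\ge 0$ follows directly from $A\ge 0$, whereas in the donor-controlled case the inflow $\sum_k a_{ki}w_k\ge 0$ requires $w_k\ge 0$, which I would secure by a bootstrap on $\tau=\sup\{t: \bm w(s)>0\ \forall s\le t\}$. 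Either way, the integrating-factor/comparison bound yields $w_i(t)\ge w_i(0)e^{-t}>0$ on $[0,T_{\max})$, so no component reaches the boundary in finite time.

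Finally, I would close the continuation. On $[0,\min(\Delta t,T_{\max}))$ the invariants confine $(A,\bm w)$ to the set where $0\le a_{ij}\le 1$, $\ones[n]^\transpose\bm w=1$, and $w_i\ge w_i(0)e^{-\Delta t}>0$, which is a compact subset of $U$. By the standard escape lemma, a maximal solution must leave every compact subset of $U$ as $t\to T_{\max}^-$ whenever $T_{\max}$ is finite; since ours does not, necessarily $T_{\max}>\Delta t$, and all invariants hold throughout $[0,\Delta t]$. The main obstacle is precisely this coupling between existence and invariance: because $p_i(w_i)\to+\infty$ as $w_i\to 0^+$, one cannot assume a priori that the solution remains in $U$, so the positivity estimate and the compactness/escape argument must be carried out jointly on the maximal interval rather than assumed. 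The uniform-in-time lower bound degrades like $e^{-t}$, which is exactly why only finite-time invariance---not invariance for all $t$---is claimed here.
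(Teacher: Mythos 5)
Your proposal is correct and follows essentially the same route as the paper: row-stochasticity from invariance of the row sums, positivity of $\bm w$ via the comparison bound $w_i(t)\ge w_i(0)e^{-t}$, and preservation of the zero/positive pattern via the exponential representation of the scalar linear ODE $\dot a_{ij}=a_{ij}g_{ij}(t)$ (the paper uses the equivalent Gr\"onwall bound $a_{ij}(t)\ge a_{ij}(0)\exp\big(-\int_0^t p_{\max}(\bm w(\tau))\,d\tau\big)$). The only substantive difference is that you make the local-existence and continuation step explicit via the escape lemma — addressing the fact that $p_i(w_i)\to+\infty$ as $w_i\to 0^+$ prevents assuming global existence a priori — which the paper leaves implicit; this is a tightening of the same argument rather than a different one.
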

\begin{proof}
	Before proving statement~\ref{thm:finiteTimeProps:w}, we give some properties of the appraisal dynamics. If $a_{ij}(t)=0$, then $\dot{a}_{ij}(t)=0$, which implies $a_{ij}(t)\geq 0$.  
	By using the Hadamard product property~\eqref{prop:Hadamard-rank1}, the matrix form of the appraisal 
	dynamics can also be written as 
	$\dot{A}=A\diag(p(\bm 
	w))-\diag(A p(\bm 
	w))A$.
	Then for $A_0\ones[n]=\ones[n]$, $\dot{A}\ones[n]=\zeros[n]$, so $A(t)$ remains row-stochastic 
	for $t\geq 0$.
	
Next, we use $A(t)$ row-stochastic to prove $w(t)\in\mathrm{int}(\Delta_n)$
for donor-controlled work flow and $t\in[0,\Delta t]$.  Left multiplying
the $\bm w(t)$ dynamics by $\ones[n]^\top$, we have $\ones[n]^\top\dot{\bm
  w}=\ones[n]^\top(-\bm w+A^\top \bm w)=\zeros[n]$. Next, let $w_i(t) =
\min_{k}\{w_k(t)\}$. For $\bm w_0\in\mathrm{int}(\Delta_n)$, $w_i(t) =
\min_{k}\{w_k(t)\} = 0$, and $A(t)\geq 0$, then $\dot w_i(t) =
\sum_{k=1}^{n}a_{ki}(t)w_k(t) \geq 0$. Therefore $\bm w(t) \in
\Delta_n$. Lastly, we apply the Gr\"onwall-Bellman Comparison Lemma to
also show that $\bm w(t)$ lives in the relative interior of the
simplex. For $w_i(0)>0$ and $\dot w_i(t) = -w_i(t) +
\sum_{k=1}^{n}a_{ki}(t)w_k(t) \geq -w_i(t)$, then $w_i(t)\geq w_i(0)
e^{-t}>0$ for $t\in[0,\Delta t]$.  Therefore, if $\bm
w_0\in\mathrm{int}(\Delta_n)$, then $\bm w(t)\in\mathrm{int}(\Delta_n)$ for
$t\in[0,\Delta t]$.
	
The proof for statement~\ref{thm:finiteTimeProps:w} can be extended to the
average-appraisal work flow~\eqref{eq:work-degree} following the same
process, since $\dot w_i(t) = -w_i(t) + \frac{1}{n}\sum_{k=1}^{n}a_{ki}(t)
\geq - w_i(t)$.
	
%

For statement~\ref{thm:finiteTimeProps:A}, to prove that $A(t)$ maintains
the same zero/positive pattern for $t\in[0,\Delta t]$, consider any $i,j$
such that $a_{ij}(0) > 0$.  Since $\bm w(t)\in\mathrm{int}(\Delta_n)$, then
$p(\bm w(t))> 0$ by the performance function assumptions and $p_j(w_j) -
\sum_{k=1}^{n}a_{ik}p_k(w_k)$ is finite for any $i,j$ and $t\in[0,\Delta
  t]$.  Let $p_{\max}(\bm w(t)) = \max_{k \in \{1,\dots,n\}}
\{p_{k}(w_{k})\}$. Then the convex combination of individual performances
is upper bounded by $\sum_{k=1}^{n}a_{ik}p_k(w_k) \leq p_{\max}(\bm
w(t))$. Now we can write the following lower bound for the time derivative
of $a_{ij}(t)$,
\begin{align*}
  \dot{a}_{ij}(t) &\geq 
  a_{ij}(t)\big(p_j(w_j(t)) -\sum\nolimits_{k=1}^{n}a_{ik}(t)p_k(w_k(t))\big) 
  \\
  &\geq -a_{ij}(t) p_{\max}(\bm w(t))
  .
\end{align*}

Using the Gr\"onwall-Bellman Comparison Lemma again, for
$t\in[0,\Delta t]$, then
\begin{align*}
  a_{ij}(t) &\geq a_{ij}(0)\exp \bigg( -\int_{0}^{t} p_{\max}(\bm w(\tau))d\tau \bigg) > 0.
\end{align*}
Therefore, $A(t)$ remains row-stochastic and maintains the same
zero/positive pattern as $A_0$ for finite time.
\end{proof}

\subsection{Team performance and optimal workload as model equilibria}
We are interested in the collective team performance and while no single
collective team performance function is widely accepted in the social
sciences, we consider three such functions. Under minor technical
assumptions, the optimal workload for all three is characterized by equal
performance levels by the individuals and is an equilibrium point of the
ASAP model. If $p_i(w_i)$ represents the marginal utility of individual
$i$, then the collective team performance can be measured by the
\emph{total utility},
\begin{equation*}
  \subscr{\mathcal{H}}{tot}(\bm w) = \sum_{i=1}^{n} \int_{0}^{w_i} p_i(x) dx. 
\end{equation*}
The team performance can alternatively be measured by the ``weakest link'' or \emph{minimum performer},
\begin{equation*}
  \subscr{\mathcal{H}}{min}(\bm w) = \min_{i\in\until{n}}\{p_i(w_i)\}.
\end{equation*}
Another metric often used is the \emph{weighted average individual
  performance}:
\begin{equation*}
\subscr{\mathcal{H}}{avg}(\bm w) = \sum_{i=1}^{n} w_i p_i(w_i).
\end{equation*}

The next theorem clarifies when the workload maximizing either $\subscr{\mathcal{H}}{tot}$, $\subscr{\mathcal{H}}{min}$, or $\subscr{\mathcal{H}}{avg}$ is an equilibrium of the ASAP model. 

\begin{theorem}[Optimal performance as equilibria of dynamics]
  \label{thm:equilibria}
  Consider performance functions $p_i$ satisfying
  Assumption~\ref{ass:performance-smooth} for all $i\in\{1,\dots,n\}$.  Then
  \begin{enumerate}
  \item 
  \label{thm:equilibria-pair} there exists a unique pair $(p^*,\workOpt)$ such that $p^*>0$,
    $\workOpt\in\mathrm{int}(\Delta_n)$, and $ p(\workOpt) = p^*\ones[n]$.
  \end{enumerate}

  Additionally, let $\mathcal{H}$ denote $\subscr{\mathcal{H}}{tot}$,
  $\subscr{\mathcal{H}}{min}$, or $\subscr{\mathcal{H}}{avg}$. Let Assumption~\ref{ass:performance-2} hold when $\mathcal{H}=\subscr{\mathcal{H}}{avg}$.
  Then
  \begin{enumerate}
    \setcounter{enumi}{1}
  \item\label{thm:equilibria-1} 
  $\workOpt$ is the unique solution to
    \begin{equation*}
      \label{eq:distrOptimization}
      \workOpt = \argmax_{\bm w\in\Delta_n} \{\mathcal{H}(\bm w)\}.
    \end{equation*}
  \end{enumerate}

  Finally, consider the ASAP model~\eqref{eq:dynamics} with
  donor-controlled work flow~\eqref{eq:work-eigvec} and let $A_0$ be
  row-stochastic, irreducible, with strictly positive diagonal and
  $\bm{w}_0 \in \mathrm{int}(\Delta_n)$. Then
  \begin{enumerate}
    \setcounter{enumi}{2}
  \item\label{thm:equilibria-2} there exists at least one matrix $A^*$ with
    the same zero/positive pattern as $A_0$ that satisfies
    $\workOpt=\vleft(A^*)$; and
  \item\label{thm:equilibria-3} every pair $(A^*,\workOpt)$, such that $A^*$
    has the same zero/positive pattern as $A_0$ and
    $\workOpt=\vleft(A^*)$, is an equilibrium.
  \end{enumerate}
\end{theorem}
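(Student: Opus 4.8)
The plan is to establish the four claims in order, as each relies on the previous. For claim~(i), I would exploit that each $p_i$ is continuous and strictly decreasing on $(0,1]$ with $\lim_{x\to 0^+}p_i(x)=+\infty$, so that $p_i$ is a bijection from $(0,1]$ onto $[p_i(1),+\infty)$ with continuous, strictly decreasing inverse $p_i^{-1}$. Writing $w_i=p_i^{-1}(c)$ and setting $h(c)=\sum_{i=1}^n p_i^{-1}(c)$, the requirement $p(\bm w)=c\,\ones[n]$ with $\bm w\in\mathrm{int}(\Delta_n)$ becomes $h(c)=1$ for $c>\max_i p_i(1)$. Since $h$ is continuous and strictly decreasing, with $h(c)\to 0$ as $c\to+\infty$ and $h\geq 1$ at the left end $c=\max_i p_i(1)$ of its domain, the intermediate value theorem yields a crossing $h(p^*)=1$; strict monotonicity gives uniqueness, and $p^*>\max_i p_i(1)$ forces each $\workOpt_i=p_i^{-1}(p^*)\in(0,1)$, i.e. $\workOpt\in\mathrm{int}(\Delta_n)$.

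For claim~(ii), I would show in each case that the unique maximizer coincides with the equal-performance point of~(i). For $\mathcal{H}_{\mathrm{tot}}$, the integrand's derivative $p_i$ is strictly decreasing, so $\mathcal{H}_{\mathrm{tot}}$ is strictly concave; its first-order stationarity on the simplex reads $p_i(w_i)=\lambda$ for all $i$, which is precisely $p(\bm w)=p^*\ones[n]$, hence the maximizer is $\workOpt$. Under Assumption~\ref{ass:performance-2}, $\mathcal{H}_{\mathrm{avg}}(\bm w)=\sum_i s_i w_i^{1-\gamma}$ is likewise strictly concave since $1-\gamma\in(0,1)$, with the same stationarity condition. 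For $\mathcal{H}_{\mathrm{min}}$ I would argue directly via the epigraph: $\max_{\bm w\in\Delta_n}\min_i p_i(w_i)$ is the largest $c$ for which $w_i\leq p_i^{-1}(c)$ and $\sum_i w_i=1$ are jointly feasible, i.e. $h(c)\geq 1$; the largest such $c$ is $p^*$, at which $\sum_i w_i=1=h(p^*)$ together with $w_i\leq p_i^{-1}(p^*)$ forces $w_i=\workOpt_i$, giving uniqueness.

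The main obstacle is claim~(iii): constructing a row-stochastic matrix with the \emph{prescribed} support $S$ (the zero/positive pattern of $A_0$) whose stationary distribution is exactly $\workOpt$. I would reformulate through the flow $F=\diag(\workOpt)A^*$, so that row-stochasticity and $\workOpt=\vleft(A^*)$ translate into $F$ being nonnegative with support $S$ and both row and column sums equal to $\workOpt$; then $A^*=\diag(\workOpt)^{-1}F$ succeeds because $\workOpt^\top A^*=\ones[n]^\top F=\workOpt^\top$, and irreducibility identifies $\workOpt$ as $\vleft(A^*)$ by Perron--Frobenius. To build such an $F$ with full support, I would start from $\diag(\workOpt)$ (already carrying the correct margins but supported only on the diagonal) and, using strong connectivity of $G(S)$, route an infinitesimal circulation around a directed cycle through each off-diagonal edge of $S$ while simultaneously debiting the corresponding diagonal entries, thereby turning every support entry positive without perturbing the row/column sums; a sufficiently small circulation keeps all entries nonnegative.

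Finally, claim~(iv) I would settle by direct substitution. Since $\workOpt=\vleft(A^*)$ gives $(A^*)^\top\workOpt=\workOpt$, the donor-controlled flow yields $\dot{\bm w}=-\workOpt+(A^*)^\top\workOpt=\zeros[n]$. By claim~(i), $p(\workOpt)=p^*\ones[n]$, so $\ones[n]p(\workOpt)^\top-A^*p(\workOpt)\ones[n]^\top=p^*\ones[n]\ones[n]^\top-p^*(A^*\ones[n])\ones[n]^\top=0$ using $A^*\ones[n]=\ones[n]$, whence $\dot A=A^*\odot 0=0$. Thus $(A^*,\workOpt)$ is an equilibrium, completing the argument.
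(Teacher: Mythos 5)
Your proposal is correct, and for statements (i), (ii), and (iv) it follows essentially the paper's route up to a change of coordinates: you parametrize the equal-performance condition by the common level $c$ and solve $h(c)=\sum_i p_i^{-1}(c)=1$, whereas the paper parametrizes by $w_1$ and solves $w_1+\sum_{i\neq 1}(p_i^{-1}\circ p_1)(w_1)=1$; these are the same monotonicity/IVT argument. Your treatment of $\subscr{\mathcal{H}}{min}$ via the feasibility threshold $h(c)\geq 1$ is in fact cleaner than the paper's $\epsilon$-perturbation contradiction (and avoids relying on the paper's loose claim that $\subscr{\mathcal{H}}{min}$ is strictly concave). The genuine divergence is in statement (iii). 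The paper stays inside the one-parameter-per-row family $A(a_d,A_0)=\diag(a_d)+(I_n-\diag(a_d))A_0$, i.e., it only re-weights the self-appraisals of $A_0$, solving $a_d^*=\ones[n]-\sigma(\bar{\bm w}\oslash\workOpt)$ with $\bar{\bm w}=\vleft(A_0)$ and $\sigma$ chosen small enough that $a_d^*\in(0,1)^n$; this is a two-line algebraic verification but only produces matrices whose off-diagonal rows are scalar multiples of those of $A_0$. You instead pass to the flow matrix $F=\diag(\workOpt)A^*$ and build $F$ with prescribed support and equal row/column margins by superposing small directed-cycle circulations onto $\diag(\workOpt)$, debiting the diagonal to preserve the margins. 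This transportation-polytope construction never uses $\vleft(A_0)$ and makes transparent that the admissible $A^*$ form a whole polytope, at the price of needing the (true) fact that strong connectivity places every off-diagonal support edge on a directed cycle and of checking that the finitely many diagonal debits leave the diagonal positive. Both arguments are valid. The only polish needed on your side is in (ii): add a line ruling out boundary maximizers of $\subscr{\mathcal{H}}{tot}$ and $\subscr{\mathcal{H}}{avg}$ (e.g., $\lim_{x\to 0^+}p_i(x)=+\infty$ makes the marginal gain of moving mass onto a zero coordinate unbounded), since first-order stationarity with a single multiplier $\lambda$ presupposes an interior maximizer; the paper handles this through the KKT multipliers $\bm{\mu}$.
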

For average-appraisal work flow~\eqref{eq:work-degree},  statements~\ref{thm:equilibria-2}-\ref{thm:equilibria-3} may not hold for $\workOpt=\frac{1}{n}(A^*)^\top \ones[n]$, since there may not exist an $A^*$ with the same zero/positive pattern as $A_0$. Section~\ref{sec:convergence-degree} elaborates on these results.

\begin{proof}
	Regarding statement~\ref{thm:equilibria-pair}, recall that $p_i$ is $C^1$ and strictly decreasing by Assumption~\ref{ass:performance-smooth} or~\ref{ass:performance-2}.
	Now we show that given our assumptions, there exists $\workOpt\in\mathrm{int}(\Delta_n)$ such that $p(\workOpt)=p^*\ones[n]$ holds. Let $p_i^{-1}$ denote the inverse of $p_i$ and let $\circ$ denote the composition of functions where $f(g(x)) = (f\circ g)(x)$. Given $p_1(w_1)=p_i(w_i)$, then $w_i = (p^{-1}_i \circ p_{1})(w_1)$ for all $i\neq 1$. Then taking into account $\workOpt\in\mathrm{int}(\Delta_n)$,
	\begin{equation*}
	w_1 + \sum\nolimits_{i=1}^{n}(p^{-1}_i \circ p_{1})(w_1) = 1.
	\end{equation*}
	$p_i$ strictly decreasing implies $p_i^{-1}$ ($p_i^{-1}\circ p_1$ resp.) is strictly decreasing (strictly increasing resp.). Therefore the left hand side of the above equation is strictly increasing, so there is a unique $\workOpti_1\in(0,1)$ solving the equation. Therefore there is a unique $(p^*,\workOpt)$ that satisfies $p(\workOpt)=p^*\ones[n]$, where $p^*=p_1(\workOpti_1)>0$.

	Regarding statement~\ref{thm:equilibria-1}, $p_i$ is strictly decreasing, $C^1$, and convex by
	Assumption~\ref{ass:performance-smooth}-\ref{ass:performance-2}. Then $\subscr{\mathcal{H}}{tot}$, $\subscr{\mathcal{H}}{min}$, and $\subscr{\mathcal{H}}{avg}$ are all strictly concave. Since we are maximizing over a compact set, and $\mathcal{H}(\bm w)$ is finite for $\bm w\in\Delta_n$, there exists a unique optimal solution
	$\workOpt\in\Delta_n$. Next we show that $\workOpt$ must satisfy $p(\workOpt)=p^*\ones[n]$ where $p^*>0$ for each collective team performance measure and $\workOpt\in\mathrm{int}(\Delta_n)$. 
	
	First, consider $\mathcal{H}=\subscr{\mathcal{H}}{tot}$. Let $\bm\mu \in\real^{n}$ and $\lambda\in\real$. Then the KKT conditions are given by:
	$ p(\workOpt) + \bm{\mu} - \lambda\ones[n] =\zeros[n] $,
	$\bm{\mu}\odot \workOpt = \zeros[n]$, and $\bm{\mu} \succeq \zeros[n]$.  If $\lambda\to \infty$, then $\workOpt=\zeros[n]$ for the first KKT condition to hold, but we require $\workOpt\in\Delta_n$. Similarly, $\workOpti_i=0$ for any $i$ would satisfy the second KKT condition, but violate the first KKT condition. As a result, $\lambda < \infty$ and $\bm\mu =
	\zeros[n]$. This implies that $p_i(\workOpti_i)=\lambda $ for all $i$. Therefore $\workOpt\in\mathrm{int}(\Delta_n)$ and there exists $p^*=\lambda \in (0,\infty) $ such that $p(\workOpt)
	= p^*\ones[n]$.
	
	Second, consider $\mathcal{H}=\subscr{\mathcal{H}}{min}$. Define the set $\argmin(p(\bm w))=\setdef{i\in\{1,\dots,n\}}{p_i(w_i)=\min_{k}\{p_k(w_k)\}}$ and let $|\argmin(p(\bm w))|$ denote the number of elements in $\argmin(p(\bm w))$. We prove the claim by contradiction. Assume $\workOpt$ is the optimal solution such that there exists at least one $j\neq i$ such that $p_i(\workOpti_i) < p_j(\workOpti_j)$ for $i\in\argmin(p(\bm w))$. Then there exists a sufficiently small $\epsilon>0$ and $\bm w^*\in\mathrm{int}(\Delta_n)$ such that $\subscr{\mathcal{H}}{min}(\workOpt) < \subscr{\mathcal{H}}{min}(\bm w^*)$, where $w^*_{i} = \workOpti_{i}-\epsilon$ and $w^*_{j} = \workOpti_{j} + \epsilon|\argmin(p(\bm w))|$.
	This contradicts the fact that $\workOpt$ is the optimal solution. Additionally, we can prove that $\workOpt\in\mathrm{int}(\Delta_n)$ by assuming there exists at least one $i$ such that $w_i=0$ and following the same proof by contradiction process. Therefore $\workOpt\in\mathrm{int}(\Delta_n)$ and $p(\workOpt)
	= p^*\ones[n]$ .
	
	Third, consider $\mathcal{H}=\subscr{\mathcal{H}}{avg}$. Let $\bm\mu \in\real^{n}$ and $\lambda\in\real$. Then the KKT conditions are given by: $ (1-\gamma)p(\bm w^*) + \bm{\mu} - \lambda\ones[n] =\zeros[n] $,
	$\bm{\mu}\odot \bm w^* = \zeros[n]$, and $\bm{\mu} \succeq \zeros[n]$. The rest of the proof follows from the same argument as used for $\mathcal{H}=\subscr{\mathcal{H}}{tot}$.
	
	Regarding statements~\ref{thm:equilibria-2} and~\ref{thm:equilibria-3},
	let $a_d = [a_{11},\dots, a_{nn}]^{\top}\in [0,1]^{n}$ and $A(a_d,A_{0}) =
	\diag(a_d) + (I_n-\diag(a_d)) A_{0}$. We prove that there exists some
	$a_d^* > 0$ such that $\workOpt = \vleft\big(A^*(a_d^*, A_{0})\big)$.
	From the assumptions on $A_{0}$, then there exists $\bar{\bm{w}} = \vleft(A_{0})$ such
	that $\sigma\bar{\bm{w}} = (I_n - \diag(a_d^*)) \workOpt$ for $\sigma \in
	\real$.  Then solving for $a_d^*$, we have $a_d^* = \ones[n] - \sigma
	(\bar{\bm{w}} \oslash \workOpt)$.  Next, we choose $\sigma =
	\epsilon/\max_{i}\{\bar{w}_i/w_i\}$ for $\epsilon \in(0,1)$, which gives
	the following bounds on $a_{ii}$ for all $i$,
	\begin{equation*}
	a_{ii} \in [1-\epsilon, 1-\epsilon \min_i\{\bar{w}_i/w_i\} (\max_i\{\bar{w}_i/w_i\})^{-1}  ] \subseteq (0,1).
	\end{equation*}
	With $a_d^* > \zeros[n]$, then $A^*(a_d^*,A_{0})$ has the same
	zero/positive pattern as $A(0)$.  This shows that, given $\workOpt$,
	there always exists a matrix $A^*$ with left dominant eigenvector
	$\workOpt$ and with the same pattern as $A(0)$.
	
	Next, we prove that any such pair $(A^*,\workOpt)$ is an equilibrium. Our
	assumptions on $A^*$ and the Perron-Frobenius theorem together imply that
	the $\mathrm{rank}(I_n-(A^*)^{\top})=n-1$.  For the ASAP
	model~\eqref{eq:dynamics} with donor-controlled work
	flow~\eqref{eq:work-eigvec}, the equilibrium conditions on the
	self-appraisal states and work assignment read:
	\begin{align} 
	\zeros[n] &= \diag\big( a_d(A^*) \big)(I_n-A^*) p(\bm{w}^*),  \label{eq:fp-1}\\
	\zeros[n] &=(A^* - I_n)^\top \bm{w}^*. \label{eq:fp-2}
	\end{align}
	Equation~\eqref{eq:fp-1} is satisfied because we know from
	statement~\ref{thm:equilibria-1} that $ p(\workOpt) = p^*\ones[n]$.
	Equation~\eqref{eq:fp-2} is satisfied because we know $\vleft(A^*) =
	\workOpt$.  This concludes the proof of statements~\ref{thm:equilibria-2}
	and~\ref{thm:equilibria-3}.
\end{proof}



The equilibria described in the above lemma also resemble an evolutionarily stable set~\cite{JH-KS:98}, which is defined as the set of strategies with the same payoff. Our proof illustrates that at least one $A^*$ always exists, but in general, there are multiple $A^*$ matrices that satisfy a particular zero/positive irreducible matrix pattern with $\workOpt = \vleft(A^*)$ with the same collective team performance. 
We will later show that, under mild conditions, this optimal
solution is an equilibrium of our dynamics with various attractivity
properties (see Section~\ref{sec:convergence-eigvec}
and~\ref{sec:convergence-degree}).

\section{Properties of Appraisal Dynamics: Conserved Quantities and Reduced Order Dynamics}
\label{sec:appraisal-properties}
In this section, we show that every cycle in the appraisal network is associated to a conserved quantity. Leveraging these conserved quantities, we reduce the appraisal dynamics to an $n-1$ dimensional submanifold. Before doing so, we introduce the notion of cycles, cycle path vectors, the cycle set, and the cycle space.
%
For a given initial appraisal matrix $A_{0}$ with strictly positive diagonal, let $m$ denote the total number of strictly positive interpersonal appraisals in the edge set $\mathcal{E}(A_0)$.
Recall that if $a_{ij}(0) = 0$ for any $i,j$, then $\dot a_{ij} = 0$, which implies $a_{ij}(t)=0$ for all $t\geq0$. 
Therefore we can consider the total number of appraisal states to be the number of edges in $A_0$, which gives a total of $n+m$ appraisal states. 

\begin{definition}[Cycles, cycle path vectors, and cycle set]
	Consider the digraph $G(A)$ associated to matrix $A\in\realnonnegative^{n\times n}$.

	A \emph{cycle} is an ordered sequence of nodes $r=\{r_1,\dots,r_{k},r_{1}\}$ with no node appearing more than once, that starts and ends at the same node, has at least two distinct nodes, and each sequential pair of nodes in the cycle denotes an edge $(r_i,r_{i+1})\in\mathcal{E}(A)$.
	We do not consider self-loops, i.e. self-appraisal edges, to be part of any cycles. 
	
	Let $C_r \in \{0,1\}^{m}$ denote the \emph{cycle path vector} associated to cycle $r$. Let each off-diagonal edge of the appraisal matrix $(i,j)\in\mathcal{E}(A)$ be assigned to a number in the ordered set $\{1,\dots,m\}$. For every edge $e\in\{1,\dots,m\}$, the $e$th component of $C_r$ is defined as
	\begin{align*}
	(C_r)_e = \begin{cases}
	+1, &\;\text{if edge } e \text{ is positively traversed by } C_r,\\
	0, &\;\text{otherwise}.
	\end{cases}
	\end{align*}
	
	Let $\Phi(A)$ denote the \emph{cycle set}, i.e. the set of all cycles, in digraph $G(A)$.
\end{definition}

To refer to a particular cycle, we will use the cycle's associated cycle path vector, which then allows us to define the cycle space. 
\begin{definition}[Cycle space]
	A \emph{cycle space} is a subspace of $\real^{m}$ spanned by cycle path vectors.
	By~\cite[pg. 29, Theorem 9]{CB:73}, the cycle space of a strongly connected digraph $G(A)$ is spanned by a basis of $\mu=m-n+1$ cycle path vectors.
	
	Let $C_B\in\{0,1\}^{m\times \mu}$ denote a matrix where the columns are a basis of the cycle space.
\end{definition}

The following theorem
\begin{enumerate*}
\item rigorously defines the conserved quantities associated to cycles in the appraisal network;
\item shows that the appraisal states can be reduced from dimension $n+m$ to $n-1$ using the conserved quantities; and
\item uses both the previous properties to introduce reduced order dynamics that have a one-to-one correspondence with the appraisal trajectories. 
\end{enumerate*}

\begin{theorem}[Conserved cycle constants give reduced order dynamics]
  \label{thm:reducedOrder}
  Consider the ASAP model~\eqref{mod:assign-appraise} with donor-controlled~\eqref{eq:work-eigvec} or average-appraisal~\eqref{eq:work-degree} work flow.  Given initial
  conditions $A_0$ row-stochastic, irreducible, with strictly positive
  diagonal and $\bm{w}_0\in\mathrm{int}(\Delta_n)$, let $(A(t),\bm{w}(t))$
  be the resulting trajectory. Then
  \begin{enumerate}
  \item\label{thm:reducedOrder-complete-s1} for any cycle $r$, the quantity
    \begin{equation}
      \label{eq:conservedCycleConstants}
      c_r = \prod_{ (i,j) \in r}\frac{a_{ii}(t)}{a_{ij}(t)},
    \end{equation}
    is constant; we refer to $c_r\in(0,\infty)$ as the cycle constant
    associated to cycle $r \in \Phi(A_0)$;
    
  \item\label{thm:reducedOrder-complete-s2} the appraisal matrix $A(t)$
    takes value in a submanifold of dimension $n-1$;
    
  \item\label{thm:reducedOrder-A2v} given a solution $(\bm{v}(t),
    \bar{\bm{w}}(t))\in \real_{>0}^{n} \times \mathrm{int}(\Delta_n)$ with
    initial condition $(\bm{v}_0, \bar{\bm{w}}_0) = (\vect{1}_n,\bm{w}_0)$ of
    the dynamics
	  \begin{equation} \label{eq:dynamics-reduced}
	    \begin{split}  
	      \dot{\bm v} &= \diag\big(  p(\bar{\bm w}) -
	      \bar{\bm{w}}^{\top} \mcA(\bm{v})  p(\bar{\bm{w}}) \ones[n] \big) \bm{v}, 
	      \\
	      \dot{\bar{\bm{w}}} &= F(\mcA(\bm{v}),\bar{\bm w}), 
	    \end{split}
	  \end{equation}
	  where $\mcA:\real^n\to\real^{n\times{n}}$ is defined by
	  \begin{equation}
	    \label{eq:v-to-A}
	    \begin{split}
	      \mcA(\bm{v}) &= \diag( A_0\bm{v})^{-1} A_0 \diag(\bm{v}),
	    \end{split}
	  \end{equation}
	  then $A(t)=\mcA(\bm{v}(t))$ and $\bm w(t) = \bar{\bm w}(t)$;
	  
	\item\label{thm:reducedOrder-eq} 
	for every equilibrium $(\bm{v}^*,\workOpt)$ of~\eqref{eq:dynamics-reduced},  $(A^*,\workOpt)$ is an equilibrium  of~\eqref{mod:assign-appraise} with $A^*=\mathcal{A}(\bm v^*)$;
	\item\label{thm:reducedOrder-complete-rank1} if additionally
          $A_0>0$, then the positive matrix $A(t)\oslash A_0$ is rank~1 for
          all time $t$.
	\end{enumerate}
\end{theorem}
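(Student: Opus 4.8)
The plan is to exhibit an explicit rank-one factorization of $A(t)\oslash A_0$. First note that since $A_0>0$, Lemma~\ref{thm:finiteTimeProps} guarantees $A(t)>0$ on its interval of existence, so every entry $a_{ij}(t)/a_{ij}(0)$ of $A(t)\oslash A_0$ is well defined and strictly positive; in particular the Hadamard quotient makes sense, which is exactly why the hypothesis $A_0>0$ is needed.

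The key observation is that the appraisal dynamics separate additively when written in logarithmic form. Dividing the scalar equation $\dot a_{ij}=a_{ij}\big(p_j(w_j)-\sum_{k}a_{ik}p_k(w_k)\big)$ by $a_{ij}>0$ gives
\begin{equation*}
\frac{d}{dt}\ln a_{ij}(t)=p_j(w_j(t))-\big(A(t)p(\bm w(t))\big)_i,
\end{equation*}
in which the first term depends only on the column index $j$ and the second only on the row index $i$. This additive separability is precisely the structure that yields a product form after exponentiation.

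Next I would integrate. Setting $\phi_j(t)=\int_0^t p_j(w_j(\tau))\,d\tau$ and $\psi_i(t)=\int_0^t\big(A(\tau)p(\bm w(\tau))\big)_i\,d\tau$, the displayed identity integrates to $\ln\big(a_{ij}(t)/a_{ij}(0)\big)=\phi_j(t)-\psi_i(t)$, hence
\begin{equation*}
\big[A(t)\oslash A_0\big]_{ij}=\frac{a_{ij}(t)}{a_{ij}(0)}=e^{-\psi_i(t)}\,e^{\phi_j(t)}.
\end{equation*}
Writing $u_i(t)=e^{-\psi_i(t)}$ and $v_j(t)=e^{\phi_j(t)}$, this reads $A(t)\oslash A_0=\bm u(t)\bm v(t)^\transpose$, an outer product of two entrywise positive vectors, and is therefore of rank exactly one. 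The only thing to verify is that the integrals $\phi_j,\psi_i$ are finite, which follows from the finite-time well-posedness of Lemma~\ref{thm:finiteTimeProps}: on $[0,\Delta t]$ one has $\bm w(\tau)\in\mathrm{int}(\Delta_n)$, so $p(\bm w(\tau))$ stays finite and the integrands are bounded.

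I expect no genuine obstacle here; the entire content is recognizing the additive splitting of the log-derivative, and everything else is routine. As a cross-check, the same conclusion follows directly from statement~\ref{thm:reducedOrder-A2v}: substituting $A(t)=\mcA(\bm v(t))=\diag(A_0\bm v)^{-1}A_0\diag(\bm v)$ yields $a_{ij}(t)=a_{ij}(0)\,v_j/(A_0\bm v)_i$, so $\big[A(t)\oslash A_0\big]_{ij}=v_j/(A_0\bm v)_i$ is again an outer product, with the two factor vectors identifiable as $v_j(t)=e^{\phi_j(t)}$ and $u_i(t)=1/(A_0\bm v(t))_i$. Either route establishes the claim.
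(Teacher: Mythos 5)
Your argument establishes only statement \ref{thm:reducedOrder-complete-rank1} of the theorem, and for that statement it is correct and takes a genuinely different route from the paper. The paper deduces the rank-one property \emph{from} the cycle constants of statement \ref{thm:reducedOrder-complete-s1}: it left-multiplies $A(t)\oslash A_0$ by the diagonal matrix $D=\diag(a_{11}(0)/a_{11},\dots,a_{n1}(0)/a_{n1})$ and uses the conservation identities along $2$- and $3$-cycles to show all rows of $D(A\oslash A_0)$ coincide. You instead integrate the log-derivative $\tfrac{d}{dt}\ln a_{ij}=p_j(w_j)-(Ap(\bm w))_i$, whose additive splitting into a column-indexed and a row-indexed term gives the explicit outer-product factorization $a_{ij}(t)/a_{ij}(0)=e^{-\psi_i(t)}e^{\phi_j(t)}$. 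This is cleaner and buys more than the paper's computation: the same splitting makes statement \ref{thm:reducedOrder-complete-s1} a one-line telescoping sum (the $\psi_i$ terms cancel within each edge and the $\phi$ terms cancel around the closed cycle), and imposing row-stochasticity on $a_{ij}(0)u_iv_j$ forces $u_i=1/(A_0\bm v)_i$, which is precisely the map $\mcA(\bm v)$ of \eqref{eq:v-to-A}. Your finiteness check of $\phi_j,\psi_i$ via Lemma~\ref{thm:finiteTimeProps} is the right justification.

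The genuine gap is coverage: the theorem comprises five claims and you prove one. Statements \ref{thm:reducedOrder-complete-s1}, \ref{thm:reducedOrder-complete-s2}, \ref{thm:reducedOrder-A2v}, and \ref{thm:reducedOrder-eq} are not established --- you invoke \ref{thm:reducedOrder-A2v} as a ``cross-check'' but never prove it (and your identification $v_j=e^{\phi_j}$ differs from the $\bm v$ of \eqref{eq:dynamics-reduced} by a common scalar factor $e^{-\int_0^t\tilde p\,d\tau}$, harmless only because $\mcA$ is homogeneous of degree zero in $\bm v$; this should be said). Most importantly, statement \ref{thm:reducedOrder-complete-s2} is the substantive dimension count: the paper passes to $b_{ij}=a_{ij}/a_{ii}$, selects a basis of $\mu=m-n+1$ independent cycle constraints, and applies the implicit function theorem to cut the $m$ off-diagonal states down to $n-1$. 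Nothing in your proposal addresses this, nor the equilibrium correspondence of statement \ref{thm:reducedOrder-eq}. As a proof of the theorem as stated, the proposal is therefore incomplete, even though the part you did prove is handled more elegantly than in the paper.
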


\begin{proof}	
	Regarding statement~\ref{thm:reducedOrder-complete-s1}, we show that
	$c_r$ is constant for any $r\in\Phi(A_0)$ by taking the natural logarithm of both
	sides of~\eqref{eq:conservedCycleConstants} and showing that the
	derivative vanishes. By Lemma~\ref{thm:finiteTimeProps},
	$\ln(c_r)$ is well-defined since $a_{ii}(t), a_{ij}(t)>0$ for any
	$a_{ij}\in r$ and finite time $t < \infty$.
	\begin{multline*}
	\diff{}{t}\ln(c_r)
	= \sum_{(i,j) \in r} 
	\Big( \frac{\dot{a}_{ii}}{a_{ii}} - 
	\frac{\dot{a}_{ij}}{a_{ij}}
	\Big)
	\\
	= \sum_{(i,j) \in r} 
	\Big( \big(p_{i}(w_i) - \bar{p}_{i}(\bm{w})\big) 
	-
	\big(p_{j}(w_{j}) - \bar{p}_{i}(\bm{w})\big) 
	\Big)
	= 0.
	\end{multline*}
	Therefore, $c_r$ is constant for all $r\in\Phi(A_0)$.

	Regarding statement~\ref{thm:reducedOrder-complete-s2}, first, we will introduce a change of variables from $A(t)$ to $B(t)=\{b_{ij}(t)\}_{i,j\in\{1,\dots,n\}}\in\realnonnegative^{n\times n}$, that comes from the  appraisal dynamics property that allows for row-stochasticity to be preserved. This allows the $n+m$ states of $A(t)$ to be reduced to $m$ states of $B(t)$. Second, we show that there exists $\mu=m-n+1$ independent cycle constants, define constraint equations associated to the cycle constants, and apply the implicit function theorem to show that the $m$ states of $B(t)$ further reduce to $n-1$ states.
	
	Let $b_{ij}(t) =
	\frac{a_{ij}(t)}{a_{ii}(t)}$ for all $i,j$. This is well-defined in finite-time by Theorem~\ref{thm:finiteTimeProps} and the assumption that $A_0$ has strictly positive diagonal. Since the diagonal entries of $B(t)$ remain constant and zero-valued edges remain zero, then we can consider the total states of $B(t)$ to be the $m$ off-diagonal edges of $B(t)$.
	Next, we introduce the cycle constant constraint functions and use the implicit function theorem to show that the $m$ states can be further reduced to $n-1$ using the cycle constants. For edge $e=(i,j)$, let $b_{ij}(t)=b_{e}(t)$.
	Let $z=[x^\top, y^\top]^\top \in \realpositive^{m}$ where $x = [b_1,\dots,b_{m-\mu}]^\top\in\realpositive^{m-\mu}$ and $y =
	[b_{m-\mu+1},\dots,b_{m}]^{\top} \in \real_{>0}^{\mu}$. 
	Consider the cycle constant constraint function
	$\map{g(x,y) = [g_1(x,y),\dots, g_{\mu}(x,y)]^\top}{\realpositive^{m-\mu}\times \realpositive^{\mu}}{\real^{\mu}}$, where $g_r(x,y) = \ln(c_r) - \sum_{(i,j)\in r}\ln(\frac{a_{ii}}{a_{ij}}) = 0$ is associated to cycle path vector $C_r$ for all $r\in\{1,\dots,\mu\}$ and the selected cycles form a basis for the cycle subspace such that $\subscr{C}{B} = [C_1,\dots,C_\mu]$. In matrix form, $g(x,y)$ reads as
	\begin{align*} 
	g(x,y)&= \begin{bmatrix}
	\ln(c_1)\\\vdots \\ \ln(c_{\mu})
	\end{bmatrix} + 
	C_B^\top \begin{bmatrix}
	\ln(b_1)\\ \vdots \\ \ln(b_{m})
	\end{bmatrix} = \zeros[\mu].
	\end{align*} 
	We partition $C_B$ into block matrices, $C_B = [\bar{C}_B^\top, \hat{C}_B^\top]^\top$ where $\bar{C}_B \in\{0,1\}^{m-\mu \times \mu}$ and $\hat{C}_B \in\{0,1\}^{\mu \times \mu}$. Then taking the partial derivative of $g(x,y)$ with respect to $y$,
	\begin{align*}
	\pdv{g(x,y)}{y} &= C_B^{\top} \begin{bmatrix}
	\zeros[m-\mu \times \mu] \\ (\diag(y))^{-1}
	\end{bmatrix} = \hat C_B^\top (\diag(y))^{-1}. 
	\end{align*} 
	The ordering of the rows of $C_B$ is determined by the ordering of the edges $e\in\{1,\dots,m\}$. Since $C_B$ is full column rank by definition, then there exists an edge ordering such that  $\mathrm{rank}(\hat C_B) = \mu$. 
	For this ordering with $\mathrm{rank}(\hat C_B) = \mu$, then $\mathrm{rank}(\pdv{g(x,y)}{y}) = \mu$. By the implicit function theorem, $y \in
	\real_{>0}^{\mu}$ is a continuous function of $x \in
	\real_{>0}^{m-\mu} = \real^{n-1}$. Equivalently, $B$ can then be reduced from $m$ states to $m-\mu = m-(m-n+1) = n-1$. Therefore if $A(t)$ is irreducible with strictly positive diagonal, then $A(t)$ can be reduced to an $n-1$ dimensional submanifold.

	Regarding statement~\ref{thm:reducedOrder-A2v}, we show that, if
        $\bm{v}(t)$ satisfies the dynamics of~\eqref{eq:dynamics-reduced},
        then $\mcA(\bm{v}(t))$ defined by equation~\eqref{eq:v-to-A}
        satisfies the original ASAP
        dynamics~\eqref{mod:assign-appraise}. For shorthand, let
        $\tilde{p}(\bm v, \bm w) = \bm{w}^{\top} \mcA(\bm{v})
         p(\bm{w})$.  We compute:
	\begin{align*}
	\dot{a}_{ij} &= \frac{a_{ij}(0) \dot{v}_{j}}{\sum\nolimits_{k=1}^{n} a_{ik}(0)v_k}  - \frac{a_{ij}(0) v_j \sum\nolimits_{k=1}^{n}a_{ik}(0)\dot{v}_{k}}{\big( \sum\nolimits_{k=1}^{n}a_{ik}(0)v_k \big)^{2}} 
	\\
	&= \frac{a_{ij}(0)v_j}{\sum\nolimits_{k=1}^{n} a_{ik}(0)v_k} \Bigg( p_j(w_j) - \tilde{p}(\bm v, \bm w)
	\\
	&\qquad - \sum_{k=1}^{n} \frac{a_{ik}(0)v_k \big( p_k(w_k) - \tilde{p}(\bm v, \bm w) \big)}{\sum\nolimits_{h=1}^{n} a_{ih}(0)v_h} \Bigg)
	\\
	&= a_{ij} \bigg( p_j(w_j) - \sum\nolimits_{k=1}^{n} a_{ik} p_k(w_k) \bigg).
	\end{align*}
  We also note that
  \begin{equation*}
  A_0 = \mcA( \bm{v}(0) ). 
  \end{equation*}
  Our claim follows from the uniqueness of solutions to ordinary differential equations.

	Statement~\ref{thm:reducedOrder-eq} follows trivially from verifying that $(\bm v^*,\workOpt)$ and $(A^*,\workOpt)$ are equilibrium points of the corresponding dynamics with $A^* = \mcA(\bm v^*) > 0$. 
	
	Regarding statement~\ref{thm:reducedOrder-complete-rank1}, we first
        show that the positive matrix $A(t)\oslash A_{0}$ is rank~$1$ for
        all time $t$.  First we multiply $A(t)\oslash A_0$ by the diagonal
        matrix $D = \diag([a_{11}(0)/a_{11}, \dots,
          a_{n1}(0)/a_{n1}])$. Then we show that $D (A(t)\oslash
        A_0)$ is rank $1$, which implies that $A(t)\oslash A_0$ is also
        rank $1$.
	\begin{align*}
	D(A\oslash A_{0}) 
	&= \begin{bmatrix}
	1 &\frac{a_{11}(0)}{a_{11}} \frac{a_{12}}{a_{12}(0)} &\cdots &\frac{a_{11}(0)a_{1n}}{a_{11}a_{1n}(0)} \\
	\vdots &\vdots &\ddots &\vdots \\
	1 &\frac{a_{n1}(0)a_{n2}}{a_{n1}a_{n2}(0)} &\cdots &\frac{a_{n1}(0)a_{nn}}{a_{n1}a_{nn}(0)} 
	\end{bmatrix}
	\end{align*}
	By assumption $A_0>0$, $G(A)$ is a complete graph for finite $t$. Then the cycle constants~\eqref{eq:conservedCycleConstants}, and any nodes $i\neq j\neq k$, we have
	$\frac{a_{kk}a_{jj}a_{ii}}{a_{kj}a_{ik}a_{ji}} = \frac{a_{kk}(0)a_{jj}(0)a_{ii}(0)}{a_{kj}(0)a_{ij}(0)a_{jk}(0)}$ and $\frac{a_{kk}a_{jj}}{a_{kj}a_{jk}} = \frac{a_{kk}(0)a_{jj}(0)}{a_{kj}(0)a_{jk}(0)}$. 
	Rearranging these two equations gives $\frac{a_{ii}}{a_{ii}(0)} \frac{a_{ij}(0)}{a_{ij}} = \frac{a_{ki}}{a_{ki}(0)} \frac{a_{kj}(0)}{a_{kj}}$. This shows that every row of $D(A\oslash A_0)$ is equivalent and $\mathrm{rank}(D(A\oslash A_0)) = \mathrm{rank}(A\oslash A_0) = 1$.
\end{proof}

\subsection*{Case study for team of two}
In order to illustrate the role of the cycle
constants~\eqref{eq:conservedCycleConstants}, we consider an example of a two-person team
with performance functions $p_1(w_1) = (\frac{0.45}{w_1})^{0.9}$
and $p_2(w_2) = (\frac{0.55}{w_2})^{0.8}$.
Figure~\ref{fig:constants} shows
the evolution of the trajectories for various initial conditions of the
ASAP model with donor-controlled work flow. The trajectories
illustrate the conserved quantities associated to the cycles in the
appraisal network, which is 
\begin{equation}
\label{eq:cycle-constant-2}
c =
\frac{a_{11}(0)a_{22}(0)}{(1-a_{11}(0))(1-a_{22}(0))}
\end{equation} for the two-node
case.  Then the cycle constant $c$ with
Theorem~\ref{thm:reducedOrder}\ref{thm:reducedOrder-complete-s2}
allows us to write the dynamics for $n=2$ as
\begin{equation}\label{eq:dynamics-2node-reduced}
\begin{split}
\dot a_{11} &= a_{11}(1-a_{11})\big( p_1(w_1) - p_2(1-w_1) \big), \\
\dot w_1 &= -w_1 + \bigg( \frac{a_{11}(1-a_{11}) (1-c) w_1 + a_{11}}{c+a_{11}(1-c)} \bigg).
\end{split}
\end{equation}

\begin{figure}
	\centering
	\includegraphics[trim=170 285 170 
	290, clip, width = 
	\linewidth]{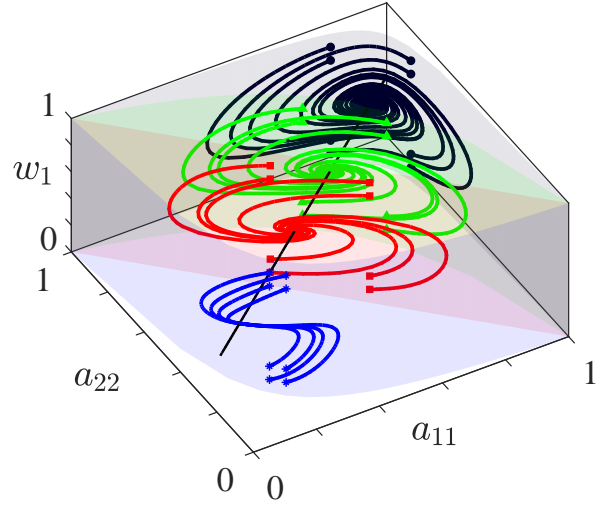}
	\caption{Trajectories of \modelone\, for 
		various initial conditions. The markers designate the initial 
		values.
		All trajectories starting on a particular 
		colored surface, remain on that colored 
		surface, where the surfaces are associated to 
		the conserved cycle constants. For $n=2$, the dynamics reduce 
		to the system~\eqref{eq:dynamics-2node-reduced} with 
		cycle constant $c$ given by~\eqref{eq:cycle-constant-2}.
		The color blue corresponds to $c<1$, red to $c=1$, and green and black to $c>1$.}
	\label{fig:constants}
\end{figure}

The cycle constants can be thought of as a parameter that measures the
level of deviation between individual's initial perception of each other's
skills. When $c_r=1$ for some $r\in\Phi(A)$, then all individuals along
cycle $r$ are in agreement over the appraisals for every other individual.

\section{Stability Analysis for the ASAP Model with Donor-Controlled Work Flow}
\label{sec:convergence-eigvec}
In this section, we study the asymptotic behavior of the ASAP model with donor-controlled work flow. Our analysis is based on a Lyapunov argument. 
Utilizing this approach, we identify initial appraisal network conditions for teams with complete graphs where the optimal workload is learned without any other additional assumptions. Under a technical assumption, we also rigorously prove that for any strongly connected team, the dynamics will converge to the optimal workload. 


The next lemma defines the performance-entropy function, which we show to be a Lyapunov function for the ASAP model under certain structural assumptions on the appraisal network. 

\begin{lemma}[Performance-entropy function]
  \label{thm:storageV}
  Consider the ASAP model~\eqref{mod:assign-appraise} with donor-controlled
  work flow~\eqref{eq:work-eigvec}. Assume $A_{0}$ row-stochastic,
  $\bm{w}_0 \in \mathrm{int}(\Delta_n)$, and there exists some $A^*$ with
  the same zero/positive pattern as $A_{0}$ such that $\workOpt =
  \vleft(A^*)$.  Define the \emph{performance-entropy function}
  $\map{V}{\{a_{ij}\}_{(i,j) \in \mathcal{E}(A_0)} \times
    \mathrm{int}(\Delta_n)}{\real}$ by
        \begin{equation}
    \label{eq:storageV}
    \begin{split}
      V(A,\bm{w}) &= -
      \sum_{i=1}^{n} \bigg( \int\nolimits_{\workOpti_i}^{w_i}p_i(x) dx \\
      &\qquad
      + \workOpti_i \sum\nolimits_{\substack{k \text{ s.t.} \\
	  (i,k)\in\mathcal{E}(A_0)}} a_{ik}^{*} \ln \Big( \frac{a_{ik}}{a_{ik}^{*}} \Big) \bigg).
    \end{split}
  \end{equation}
  Then
  \begin{enumerate}
  \item $V(A,\bm{w}) > 0$ for $A\neq A^*$ or $\bm{w} \neq \workOpt$, and
  \item the Lie derivative of $V$ is 
    \begin{equation}
      \label{eq:storageV-lie}
      \dot{V}(A,\bm w) = p(\bm w)^{\transpose}(I_n-A^{\transpose}) (\bm w - \workOpt).
    \end{equation}
  \end{enumerate}
\end{lemma}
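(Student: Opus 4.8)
The plan is to treat $V$ as a sum of two nonnegative pieces and to verify the Lie derivative by direct differentiation along the dynamics. I would write $V = V_{\bm w} + V_A$, where $V_{\bm w}(\bm w) = -\sum_{i=1}^{n}\int_{\workOpti_i}^{w_i} p_i(x)\,dx$ collects the integral terms and $V_A(A) = -\sum_{i=1}^{n}\workOpti_i \sum_{k:(i,k)\in\mathcal{E}(A_0)} a_{ik}^{*}\ln(a_{ik}/a_{ik}^{*})$ collects the entropy terms. By Lemma~\ref{thm:finiteTimeProps} the trajectory keeps $\bm w(t)\in\mathrm{int}(\Delta_n)$ and $a_{ik}(t)>0$ on the support of $A_0$, so every logarithm and integral appearing in $V$ is well defined.

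For statement~(i), I would show each piece is nonnegative and vanishes only at the equilibrium. For $V_A$, rewrite the inner sum as $\sum_k a_{ik}^{*}\ln(a_{ik}^{*}/a_{ik})$; since $A$ and $A^*$ share the support of $A_0$ and are both row-stochastic, the restriction of rows $a_i$ and $a_i^{*}$ to that support are probability vectors on a common finite set, so this inner sum is the relative entropy (Kullback--Leibler divergence), which by Gibbs' inequality is nonnegative and equals zero exactly when $a_i = a_i^{*}$. Weighting by $\workOpti_i>0$ and summing gives $V_A\geq 0$, with equality iff $A=A^*$. For $V_{\bm w}$, the key observation is that on the simplex $\sum_i(w_i-\workOpti_i)=0$, so I may subtract the vanishing quantity $p^{*}\sum_i(w_i-\workOpti_i)$ and, recalling from Theorem~\ref{thm:equilibria} that $p(\workOpt)=p^{*}\ones[n]$, rewrite $V_{\bm w} = -\sum_i\int_{\workOpti_i}^{w_i}(p_i(x)-p^{*})\,dx$. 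Since $p_i$ is strictly decreasing with $p_i(\workOpti_i)=p^{*}$, the integrand $p_i(x)-p^{*}$ is positive for $x<\workOpti_i$ and negative for $x>\workOpti_i$, so each summand is strictly convex in $w_i$ with unique minimum value $0$ at $\workOpti_i$. Hence $V_{\bm w}\geq 0$ with equality iff $\bm w=\workOpt$, and combining the two pieces yields $V>0$ whenever $A\neq A^*$ or $\bm w\neq\workOpt$.

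For statement~(ii), I would compute $\dot V = \sum_i \frac{\partial V}{\partial w_i}\dot w_i + \sum_{(i,k)}\frac{\partial V}{\partial a_{ik}}\dot a_{ik}$ using the partials $\frac{\partial V}{\partial w_i}=-p_i(w_i)$ and $\frac{\partial V}{\partial a_{ik}}=-\workOpti_i a_{ik}^{*}/a_{ik}$. Substituting the donor-controlled work flow $\dot w_i=-w_i+\sum_k a_{ki}w_k$ yields the workload contribution $p(\bm w)^{\top}(I_n-A^{\top})\bm w$. Substituting the appraisal dynamics $\dot a_{ik}=a_{ik}\big(p_k(w_k)-\sum_j a_{ij}p_j(w_j)\big)$, the factor $a_{ik}$ cancels, and using $\sum_k a_{ik}^{*}=1$ the weighted-average terms collapse, leaving the appraisal contribution $(\workOpt)^{\top}(A-A^*)p(\bm w)$. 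The final step is to invoke $\workOpt=\vleft(A^*)$, i.e. $(A^*)^{\top}\workOpt=\workOpt$, which replaces $A^*$ by $I_n$ and turns this into $-p(\bm w)^{\top}(I_n-A^{\top})\workOpt$. Adding the two contributions gives $\dot V = p(\bm w)^{\top}(I_n-A^{\top})(\bm w-\workOpt)$, as claimed.

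The main obstacle I anticipate is the positivity argument for $V_{\bm w}$: the individual integral terms are not sign-definite, and only after exploiting the simplex constraint to re-center each integrand by $p^{*}$ do they become nonnegative convex deviations. The cancellation of $A^*$ in the Lie derivative is the other place where structure matters, but there the row-stochasticity and the left-eigenvector property of $A^*$ are exactly what is needed, so that step should be routine once the partials are in hand.
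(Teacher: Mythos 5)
Your proposal is correct and follows essentially the same route as the paper's proof: the same split of $V$ into an integral (total-utility) part and a Kullback--Leibler entropy part, nonnegativity of the entropy part via Jensen/Gibbs, and the Lie derivative obtained by substituting the dynamics and invoking $(A^*)^{\top}\workOpt=\workOpt$ to collapse the appraisal contribution. The only differences are cosmetic: you differentiate entrywise where the paper works in matrix form with the Hadamard identity, and your re-centering of the integrand by $p^{*}$ using the simplex constraint makes explicit the positivity argument that the paper compresses into a one-line appeal to convexity.
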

The first term of the function is the rescaled
total utility, $\subscr{\mathcal{H}}{tot}(\workOpt) - \subscr{\mathcal{H}}{tot}(\bm w) = -\sum_{i=1}^{n}
\int_{\workOpti_i}^{w_i} p_i(x) dx$.
The second term, $\workOpti_i\sum_{\substack{k \text{ s.t.} \\ (i,k)\in\mathcal{E}(A_0)}}
a_{ik}^{*} \ln \frac{a_{ik}}{a_{ik}^{*}}$, is the
Kullback-Liebler relative entropy measure~\cite{JWW:1997}.

\begin{proof}
	By Assumption~\ref{ass:performance-smooth}, $-\sum_{i=1}^{n}\int_{\workOpti_i}^{w_i} p_i(x) dx$ is convex with minimum value if and only if $w_i=\workOpti_i$. Therefore this term is positive definite for $\bm w\neq \workOpt$.
	Since the function $-\ln(\cdot)$ is strictly convex and $\sum_{k=1}^{n}a_{ik}^*=1$, Jensen's inequality can be used to give the following lower bound,
\begin{equation*}
	-\sum\nolimits_{\substack{k \text{ s.t.} \\
	(i,k)\in\mathcal{E}(A_0)}} a_{ik}^{*}\ln\Big( \frac{a_{ik}}{a_{ik}^{*}} \Big) \geq 0,
\end{equation*}
where the inequality holds strictly if and only if $A\neq A^*$. 

For the last statement of the lemma and with the assumption $\workOpt = \vleft(A^*)$, the Lie derivative of $V$ is
\begin{multline*}
\dot{V}(A,\bm w) = - p(\bm{w})^\transpose \dot{\bm{w}} - (\workOpt)^{\transpose} \big(A^* \odot (\dot{A}\oslash A) \big) \ones[n]
\\
= - p(\bm w)^{\transpose}\dot{\bm w} - (\workOpt)^{\transpose} \big(A^* \odot (\ones[n] p(\bm w)^{\transpose} - A p(\bm w)\ones[n]^\transpose) \big) \ones[n].
\end{multline*}
Then using Hadamard product property~\eqref{prop:Hadamard-rank1} and $\workOpt = \vleft(A^*)$, $\dot{V}(A, \bm w)$
further simplifies to
\begin{multline*}
\dot{V}(A,\bm w) = - p(\bm w)^{\transpose}\dot{\bm w} 
\\
\quad - (\workOpt)^{\transpose} \big(A^*\diag( p(\bm{w})) - \diag(A p(\bm{w})) A^* \big) \ones[n] \\
= - p(\bm w)^{\top}\dot{\bm w} - (\workOpt)^\top \big(A^* p(\bm w) - A p(\bm w) \big) \\
= p(\bm w)^{\transpose}(I_n-A^{\transpose}) (\bm w - \workOpt). \qedhere
\end{multline*}
\end{proof}

The next theorem states the convergence results to the optimal workload for various cases on the connectivity of the initial appraisal matrix. For donor-controlled work flow, the optimal workload is equal to the eigenvector centrality of the network~\cite{PB:87}, which is a measure of the individual's importance as a function of the network structure and appraisal values. Therefore the equilibrium workload value quantifies each team member's contribution to the team and learning the optimal workload reflects the development of TMS within the team.
Note that statement~\ref{p3:general} relies on the assumption that conjecture given in the statement holds. This conjecture is discussed further at the end of the section, where we provide extensive simulations to illustrate its high likelihood.

\begin{theorem}[Convergence to optimal workload for strongly connected teams]
  \label{thm:general}
  Consider the ASAP model~\eqref{eq:dynamics} with donor-controlled work
  flow~\eqref{eq:work-eigvec}.  Given initial conditions $A_{0}$
  row-stochastic, irreducible, with strictly positive diagonal and $\bm
  w_0\in\mathrm{int}(\Delta_n)$.  The following statements hold:
  \begin{enumerate}
  \item\label{p1:n=2} if $n=2$ and $A_0>0$, then $\lim_{t\to\infty}\left(
    A(t),\bm w(t) \right)=(A^*,\workOpt)$ such that $A^* > 0$ is
    row-stochastic and $\workOpt=\vleft(A^*)$;
		
  \item\label{p2:rank1} if there exists $a_d(0) = [a_{11}(0),\dots a_{nn}(0)]^\top \in\mathrm{int}(\Delta_n)$ such
    that $A_{0} =  \ones[n] a_d(0)^\top$ is also rank $1$,
    then $\lim_{t\to\infty}(A(t),\bm w(t)) =
    (\ones[n](\workOpt)^\top,\workOpt)$.
    
  \end{enumerate}
	Moreover, define $\bm{v}(t)\in\real_{>0}^{n}$ as in
	Theorem~\ref{thm:reducedOrder}\ref{thm:reducedOrder-A2v}. 
	\begin{enumerate}
		\setcounter{enumi}{2}
		\item 
		\label{p3:general}
		If $\bm v(t)$ is uniformly bounded for all $(A_0,\bm w_0)$ and $t\geq 0$, then
		$\lim_{t\to\infty}( A(t),\bm{w}(t) ) = (A^*,\workOpt)$ such that $A^*$
		is row-stochastic, has the same zero/positive pattern as $A_0$, and
		$\workOpt = \vleft(A^*)$.
	\end{enumerate}
\end{theorem}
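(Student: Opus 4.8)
The plan is to pass to the reduced coordinates $(\bm v,\bm w)$ of Theorem~\ref{thm:reducedOrder}\ref{thm:reducedOrder-A2v}, where $A(t)=\mcA(\bm v(t))$, and first to fix the target matrix. By Theorem~\ref{thm:equilibria}\ref{thm:equilibria-2} there is at least one $A^*$ with the pattern of $A_0$ and $\workOpt=\vleft(A^*)$; since the cycle constants \eqref{eq:conservedCycleConstants} are conserved and, together with the pattern and the constraints $\vleft(A^*)=\workOpt$, determine $A^*$ uniquely on the reachable $(n-1)$-dimensional submanifold, this $A^*$ is the only candidate limit. It then suffices to prove $\bm w(t)\to\workOpt$ and $A(t)\to A^*$.

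My first instinct is the Lyapunov/LaSalle route used for statements~\ref{p1:n=2} and~\ref{p2:rank1}: the performance-entropy $V$ of Lemma~\ref{thm:storageV} is positive definite about $(A^*,\workOpt)$ and blows up as any $a_{ij}\to0$. Uniform boundedness of $\bm v$ is exactly what makes this route viable on the topological side: because $\mcA(\bm v)_{ij}=(A_0)_{ij}v_j/(A_0\bm v)_i$, keeping $\bm v$ in a compact subset of $\realpositive^{n}$ keeps every positive entry of $A(t)$ bounded away from $0$, so the trajectory is precompact and stays in the interior where $V$ and $p(\bm w)$ are finite. The difficulty is that, writing $p(\bm w)-p^*\ones[n]=-R(\bm w-\workOpt)$ with $R\succ0$ the diagonal of mean-value performance slopes, the Lie derivative \eqref{eq:storageV-lie} simplifies to $\dot V=-(\bm w-\workOpt)^{\transpose}R(I_n-A^{\transpose})(\bm w-\workOpt)$, whose symmetric form is not sign-definite on $\ones[n]^{\perp}$ for a general strongly connected $A_0$ (it is for $n=2$ and for rank-one $A_0$, which is why those two cases close). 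So $V$ alone does not give monotonic descent in general.

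Hence I would close the argument by a limit-set analysis rather than pure LaSalle. Let $\Omega$ be the $\omega$-limit set of the precompact reduced trajectory: nonempty, compact, connected, and invariant under the (now complete) flow. The key identity is the replicator form $\tfrac{d}{dt}\ln(v_i/v_j)=p_i(w_i)-p_j(w_j)$ read off from \eqref{eq:dynamics-reduced}; since $\bm v$ is uniformly bounded, every $\ln(v_i/v_j)$ is bounded along the whole orbit. Applying an extremum-on-$\Omega$ argument to $\max_i\ln v_i$ and $\min_i\ln v_i$ (each extremizing point has its full orbit in $\Omega$, forcing the corresponding time derivative to vanish there), I would conclude that the performance disparities $p_i(w_i)-p_j(w_j)$ vanish throughout $\Omega$, i.e. $p(\bm w)=p^*\ones[n]$ on $\Omega$; by Theorem~\ref{thm:equilibria}\ref{thm:equilibria-pair} this forces $\bm w\equiv\workOpt$ on $\Omega$. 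On the slice $\{\bm w=\workOpt\}$ one checks $\phi=\workOpt^{\transpose}\mcA(\bm v)p(\workOpt)=p^*$ while $p(\workOpt)=p^*\ones[n]$, so the $\bm v$-dynamics vanish; thus $\Omega$ consists of equilibria, each of which satisfies $\vleft(\mcA(\bm v))=\workOpt$ with the pattern and cycle constants of $A_0$, and hence maps to the single $A^*$ of the first paragraph. Connectedness of $\Omega$ then yields $A(t)\to A^*$ and $\bm w(t)\to\workOpt$, and Theorem~\ref{thm:reducedOrder}\ref{thm:reducedOrder-eq} confirms $(A^*,\workOpt)$ is an equilibrium of the full model.

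The main obstacle is precisely the middle step of the last paragraph: upgrading the mere boundedness of $\bm v$ into pointwise decay of the performance disparities on $\Omega$. The extremum argument cleanly pins down a single maximizing index, but one must propagate the equality $p_i(w_i)=p_j(w_j)$ from the extremal indices to all pairs—equivalently, rule out nontrivial recurrence on $\Omega$ in which the log-appraisal ratios oscillate with bounded amplitude—and it is exactly here that the uniform-boundedness hypothesis on $\bm v$, itself only conjectural, is indispensable.
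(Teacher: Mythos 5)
Your diagnosis of the obstruction is correct: with the performance--entropy function $V$ of Lemma~\ref{thm:storageV}, the Lie derivative $p(\bm w)^{\transpose}(I_n-A^{\transpose})(\bm w-\workOpt)$ is not sign-definite for a general strongly connected $A$, so monotone descent of $V$ fails. But the cure you propose does not close, and you concede as much: the extremum argument applied to $\max_i\ln v_i$ and $\min_i\ln v_i$ only forces $\dot v_i/v_i=p_i(w_i)-\bm w^{\transpose}\mcA(\bm v)p(\bm w)$ to vanish at the extremal points of the $\omega$-limit set; you have no mechanism to propagate $p_i(w_i)=p_j(w_j)$ to all index pairs, i.e.\ to rule out bounded recurrent oscillation of the log-ratios on $\Omega$. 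That propagation is the entire content of the theorem, so as written this is a sketch with its central step missing. (A secondary unproven assertion is the uniqueness of $A^*$ on the reachable submanifold, which you invoke to pass from connectedness of $\Omega$ to convergence to a single point.)

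The paper closes the gap not by abandoning the Lyapunov route but by changing the Lyapunov function in the reduced coordinates: it uses $\bar V(\bm v,\bm w)=-\sum_{i}\big(\int_{\workOpti_i}^{w_i}p_i(x)\,dx+\workOpti_i\ln v_i\big)$, i.e.\ the entropy term is taken in $\bm v$ with weights $\workOpti_i$ (which sum to one) rather than as the KL divergence of $A$ from $A^*$. Along~\eqref{eq:dynamics-reduced} the weighted-average performance $\bm w^{\transpose}\mcA(\bm v)p(\bm w)$ enters both the $\dot{\bm w}$- and the $\dot{\bm v}$-contributions and cancels exactly, leaving $\dot{\bar V}=p(\bm w)^{\transpose}(\bm w-\workOpt)=(p(\bm w)-p^{*}\ones[n])^{\transpose}(\bm w-\workOpt)\le 0$, with equality if and only if $\bm w=\workOpt$ by strict monotonicity of each $p_i$; the troublesome factor $(I_n-A^{\transpose})$ disappears. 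Finiteness of $\bar V(\bm v_0,\bm w_0)$ together with $\dot{\bar V}\le 0$ bounds $\bm v$ away from zero, the hypothesis of statement~\ref{p3:general} supplies the upper bound, and LaSalle on the resulting compact set gives $\bm w\to\workOpt$, hence $\dot{\bm v}=\diag(\bm v)(p(\workOpt)-p^{*}\ones[n])=\zeros[n]$ on the invariant set, so the limit is an equilibrium with $\workOpt=\vleft(A^*)$. If you replace your $\omega$-limit-set machinery with this modified $\bar V$, the rest of your argument (precompactness because the positive entries of $\mcA(\bm v)$ stay bounded away from zero when $\bm v$ lives in a compact subset of $\realpositive^{n}$, and the identification of the limit via Theorem~\ref{thm:reducedOrder}\ref{thm:reducedOrder-eq}) goes through.
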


\begin{proof}
	Statement~\ref{p1:n=2} follows directly from the fact that the function defined by~\eqref{eq:storageV} is a Lyapunov function for the system. For brevity, we omit the proof of Statement~\ref{p1:n=2}, since it follows a similar proof to statement~\ref{p2:rank1}.
	
	Regarding statement~\ref{p2:rank1},
	if $A_0$ is the rank $1$ form given by the theorem assumptions, then $c_r=1$ for all cycles $r\in\Phi(A_0)$ by Theorem~\ref{thm:reducedOrder}\ref{thm:reducedOrder-complete-s1}. This implies that $a_{ij}=a_{kj}$ for any $j$, all $i\neq k$, and $t\geq 0$. For the storage function $V(A,\bm w)$ as defined by~\eqref{eq:storageV}, the Lie derivative~\eqref{eq:storageV-lie} simplifies to,
	\begin{align*}
	\dot{V} &=  p(\bm w)^{\transpose}(I_n - a_d\ones[n]^\top)\bm w -  p(\bm w)^{\transpose}(I_n-a_d\ones[n]^\top) \workOpt
	\\
	&= p(\bm w)^{\transpose}(\bm w - a_d - \workOpt+a_d)  = p(\bm w)^\top (\bm w-\workOpt).
	\end{align*} 
	From $\bm w,\workOpt\in\Delta_n$, then $p(\bm{w})^{\transpose}(\bm{w}- \workOpt) = ( p(\bm{w}) -
	p(\workOpt))^{\top} (\bm{w} - \workOpt) = ( p(\bm{w}) -
	p^*\ones[n])^{\top} (\bm{w} - \workOpt)$. Since $p_i(w_i)$ strictly decreasing by Assumption~\ref{ass:performance-smooth} or~\ref{ass:performance-2}, then $\dot{V} < 0$ for $\bm w \neq \workOpt$.
	Then $V$ is a Lyapunov function for the rank $1$ initial appraisal case and $\lim_{t\to \infty}(A(t),\bm w(t)) = (\ones[n](\workOpt)^\top, \workOpt)$.

	Regarding statement~\ref{p3:general}, we start by considering the equivalent reduced order appraisal
	dynamics~\eqref{eq:dynamics-reduced} and by proving asymptotic
	convergence using LaSalle's Invariance Principle.  Define the function
	$\map{\bar{V}}{\real_{>0}^{n} \times \mathrm{int}(\Delta_n)}{\real}$, which is a modification of the storage
	function~\eqref{eq:storageV} by replacing the term $\frac{a_{ij}}{a_{ij}^*}$ with $v_i$
	for all $i,j$,
	\begin{align}
	\bar{V}(\bm v,\bm w) &= - \sum_{i=1}^{n} \bigg( \int\nolimits_{\workOpti_i}^{w_i} p_i(x)dx + \workOpti_i\ln(v_i) \bigg).
	\end{align}
	The Lie derivative of $\bar{V}$ is 
	\begin{align*}
	\dot{\bar{V}} 
	&= - p(\bm{w})^{\transpose} \dot{\bm{w}} - (\dot{\bm v}\oslash \bm v)^{\transpose} \workOpt
	\\
	&=  p(\bm{w})^{\transpose}(I_n-A^{\transpose})\bm{w} - \big(  p(\bm{w})- p(\bm w)^{\transpose} A^{\transpose} \bm{w}\ones[n] \big)^{\transpose} \workOpt
	\\
	&=  p(\bm{w})^{\transpose}(\bm{w}- \workOpt) \leq 0.
	\end{align*}
	We can now define the sublevel set $\Omega =
	\setdef{\bm{v}\in \real_{>0}^{n},
		\bm{w}\in\mathrm{int}(\Delta_n)}{\bar{V}( \bm{v},\bm{w} ) \leq
		\bar{V}( \bm{v}_0,\bm{w}_0 ), t\geq 0}$, which is closed and positively invariant.
	Note that if there exists any $i$ such that $\lim_{t\to \infty}v_{i}=0$, then $\lim_{t\to \infty}\bar{V}(\cdot) = \infty$. However, $\dot{\bar{V}}\leq 0$ and $\bar{V}(\bm v_0,\bm w_0)$ is finite, so $\bm v(t)$ must be bounded away from zero by a positive value for $t \geq 0$. 
	By our assumption, $\bm{v}(t)$ is also upper bounded. Then there exists constants $v_{\min},v_{\max}>0$ such that $\bm{v} \in [v_{\min}, v_{\max}]^{n}$. Then by LaSalle's Invariance Principle, the trajectories must converge to the largest invariant set contained in the intersection of 
	\begin{equation*}
	\setdef{\bm{v}\in [v_{\min}, v_{\max}]^{n},
		\bm{w}\in\mathrm{int}(\Delta_n)}{\dot{\bar{V}} = 0} \intersection \Omega.
	\end{equation*} 
	By Theorem~\ref{thm:equilibria}, if $\dot{\bar{V}} = 0$, then $w=\workOpt$ and $ p(\workOpt)=p^*\ones[n]$. This implies $\dot{\bm{v}} = \diag(\bm v)( p(\workOpt) - p^*\ones[n]) = 0$, so $\bm{v} = \bm{v}^* > 0$. 
	By Theorem~\ref{thm:reducedOrder}\ref{thm:reducedOrder-eq}, $(\bm{v}^*,\workOpt)$ corresponds to equilibrium $(A^*,\workOpt)$. Therefore $\lim_{t\to \infty}(\bm{v}(t),\bm w(t)) = (\bm{v}^*,\workOpt)$ is equivalent to $\lim_{t\to \infty}(A(t),\bm{w}(t)) = (A^*,\workOpt)$ such that $\workOpt = \vleft(A^*)$ and $A^* = \mc{A}(\bm v^*)$, where $A^*$ and $A_0$ have the same zero/positive pattern. 
\end{proof}


Theorem~\ref{thm:general}\ref{p3:general} establishes asymptotic
convergence from all initial conditions of interest under the assumption that the trajectory $\bm v(t)$ is
uniformly bounded.  Throughout our numerical simulation studies, we have
empirically observed that this assumption has always been satisfied.  We
now present a Monte Carlo analysis~\cite{RT-GC-FD:05} to estimate the
probability that this uniform boundedness assumption holds.

For any randomly generated pair $(A_{0}, \bm w_0)$, which corresponds to
$\bm v_{0}=\ones[n]$, define the indicator function
$\map{\mathds{I}}{\realnonnegative^{n} \times
  \mathrm{int}(\Delta_n)}{\{0,1\}}$ as
\begin{enumerate}
\item $\mathds{I}(A_0,\bm{w}_0) = 1$ if there exists $v_{\max}$ such that
  $\bm v(t) \leq v_{\max}\ones[n]$ for all $t\in[0,1000]$;
	
	\item $\mathds{I}(A_0,\bm{w}_0) = 0$, otherwise.
\end{enumerate} 
Let $p=\mathds{P}[\mathds{I}( A_{0},\bm w_0 )=0]$. We estimate $p$ as
follows.  We generate $N\in\mathbb{N}$ independent identically distributed
random sample pairs, $(A_{0}^{(i)},\bm{w}_0^{(i)})$ for
$i\in\{1,\dots,N\}$, where $A_{0}^{(i)} \in [0,1]^{n\times n}$ is
row-stochastic, irreducible, with strictly positive diagonal and
$\bm{w}_0^{(i)} \in\mathrm{int}(\Delta_n)$.

Finally, we define the empirical probability as 
\[ \hat{p}_N=\frac{1}{N}\sum_{i=1}^{N}\mathds{I}( A_{0}^{(i)},\bm{w}_0^{(i)} ). \]
For any accuracy $1-\epsilon\in(0,1)$ and 
confidence level $1-\xi\in(0,1)$, then by the Chernoff Bound~\cite[Equation 9.14]{RT-GC-FD:05},
$|\hat{p}-p|<\epsilon$ 
with probability greater than confidence level $1-\xi$ if
\begin{equation}
\label{eq:chernoff-bound}
N \geq \frac{1}{2\epsilon^2}\log\frac{2}{\xi}.
\end{equation}
For $\epsilon=\xi=0.01$, the Chernoff bound~\eqref{eq:chernoff-bound} is satisfied by $N=27\,000$.

Our simulation setup is as follows. We run $27\,000$ independent MATLAB
simulations for the ASAP model~\eqref{eq:dynamics} with donor-controlled
work flow~\eqref{eq:work-degree}. We consider $n=6$, irreducible with
strictly positive diagonal $A_0$ generated using the Erd\"os-Renyi random
graph model with edge connectivity probability $0.3$, and performance
functions of the form $p_i(w_i) = (\frac{s_i}{w_i})^{\gamma_i}$ for
$\gamma_i\in(0,1)$ and $[s_1,\dots,s_n]\in\mathrm{int}(\Delta_n)$.  We find
that $\hat{p}_N=1$. Therefore, we can make the following statement.
\newline
	\emph{
  Consider 
  \begin{enumerate*}
  	\item $n=6$;
  	\item $A_0$ irreducible with strictly positive diagonal generated by the Erd\"os-Renyi random graph model with edge connectivity probability $0.3$, and randomly generated edge weights normalized to be row-stochastic; and 
  	\item $\bm w_0\in\mathrm{int}(\Delta_n)$.
  \end{enumerate*} 
	Then with $99\%$ confidence level, there is at least
  $0.99\%$ probability that $\norm{\bm{v}(t)}{}$ is uniformly upper bounded
  for $t\in[0,1000]$.
}


\section{Stability Analysis for the ASAP Model with Average-Appraisal Work Flow}\label{sec:convergence-degree}

This section investigates the asymptotic behavior of the ASAP model~\eqref{eq:dynamics} with average-appraisal work flow~\eqref{eq:work-degree}.  
In contrast with the eigenvector centrality model, we observe that strongly connected teams obeying this work flow model are not always able to learn their optimal work assignment. 
First we give a necessary condition on the initial appraisal matrix and optimal work assignment for convergence to the optimal team performance.
Second, we prove that learning the optimal work assignment can be guaranteed if the team has a complete network topology or if the collective team performance is optimized by an equally distributed workload.
Note that the results in Sections~\ref{sec:framework}-\ref{sec:appraisal-properties} also hold for average-appraisal work flow, only if the equilibrium satisfies $ \workOpt = \frac{1}{n}(A^*)^\top \ones[n]$.

Let $\ceil{x}$ denote the ceiling function which rounds up all elements of $x$ to the nearest integer. 
The following lemma gives a condition that guarantees when the team is unable to learn the optimal workload assignment.
\begin{lemma}[Condition for failure to learn optimal work assignment for the degree centrality model]
	\label{thm:degree-centrality-necessary}
	Consider the ASAP model~\eqref{eq:dynamics} with average-appraisal work flow~\eqref{eq:work-degree}.
	Assume $A_{0}$ row-stochastic and $\bm w_0\in \mathrm{int}(\Delta_n)$.  If there exists at least one $i\in\{1,\dots,n\}$ such that 
	$\workOpti_i > \max\{ \frac{1}{n}\sum_{k=1}^{n}\ceil{a_{ki}(0)},w_i(0)\}$. Then $\bm w(t) \neq \workOpt$ for any $t\geq0$.
\end{lemma}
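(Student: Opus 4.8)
The plan is to isolate the single coordinate $i$ for which the hypothesis holds and to show that $w_i(t)$ can never reach $\workOpti_i$. Since $\bm w(t)=\workOpt$ would require equality in every coordinate, a strict separation $w_i(t)<\workOpti_i$ in coordinate $i$ alone already forces $\bm w(t)\neq\workOpt$ for all $t\ge 0$. The whole argument therefore reduces to a one-dimensional comparison estimate on the scalar dynamics $\dot w_i = -w_i + \frac1n\sum_{k=1}^n a_{ki}(t)$ coming from the average-appraisal work flow.

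First I would bound the forcing term $\frac1n\sum_{k=1}^n a_{ki}(t)$ uniformly in time. The appraisal dynamics preserve both row-stochasticity (since $\dot A\ones[n]=\zeros[n]$) and the zero pattern (since $a_{ij}=0$ forces $\dot a_{ij}=0$), exactly as recorded in the proof of Lemma~\ref{thm:finiteTimeProps}; here only row-stochasticity of $A_0$ is assumed, but these two structural facts follow directly from the dynamics and need neither irreducibility nor a positive diagonal. Consequently each entry obeys $0\le a_{ki}(t)\le 1$, with $a_{ki}(t)=0$ whenever $a_{ki}(0)=0$. Because $a_{ki}(0)\in[0,1]$, the ceiling $\ceil{a_{ki}(0)}$ equals $1$ precisely when $a_{ki}(0)>0$ and equals $0$ otherwise, so $a_{ki}(t)\le\ceil{a_{ki}(0)}$ for every $k$ and all $t\ge 0$. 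Summing yields the key bound $\frac1n\sum_{k=1}^n a_{ki}(t)\le C:=\frac1n\sum_{k=1}^n\ceil{a_{ki}(0)}$.

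Next I would run a Gr\"onwall-Bellman comparison on coordinate $i$. The bound above gives the differential inequality $\dot w_i(t)\le -w_i(t)+C$, so $w_i$ is dominated by the solution $u$ of $\dot u=-u+C$ with $u(0)=w_i(0)$, namely $u(t)=C+(w_i(0)-C)e^{-t}=(1-e^{-t})C+e^{-t}w_i(0)$. For $t\ge 0$ this is a convex combination of $C$ and $w_i(0)$, hence $u(t)\le\max\{C,w_i(0)\}$, and therefore $w_i(t)\le\max\{C,w_i(0)\}$ for all $t\ge 0$. By hypothesis $\workOpti_i>\max\{C,w_i(0)\}$, so $w_i(t)<\workOpti_i$ for every $t\ge 0$, and in particular $\bm w(t)\neq\workOpt$, which is the claim.

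The only delicate point is the uniform-in-time entrywise bound $a_{ki}(t)\le\ceil{a_{ki}(0)}$: it is what converts the appraisal-network in-neighbor structure into the ceiling sum, and it hinges on recognizing that preservation of row-stochasticity and of the zero/positive pattern remains available under the weaker hypotheses stated here, rather than the full assumptions of Lemma~\ref{thm:finiteTimeProps}. Once that bound is in hand the comparison estimate is routine, and the appearance of the $\max$ in the hypothesis is explained precisely by the convex-combination form of $u(t)$.
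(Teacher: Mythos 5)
Your proof is correct and follows essentially the same route as the paper: bound the forcing term by $C=\frac{1}{n}\sum_{k=1}^{n}\ceil{a_{ki}(0)}$ using preservation of row-stochasticity and of the zero pattern, then apply the Gr\"onwall--Bellman comparison on the scalar dynamics of $w_i$ to conclude $w_i(t)\le\max\{C,w_i(0)\}<\workOpti_i$. You additionally spell out the entrywise bound $a_{ki}(t)\le\ceil{a_{ki}(0)}$ that the paper's one-line argument leaves implicit, and your convex-combination form $(1-e^{-t})C+e^{-t}w_i(0)$ of the comparison solution is the correct one (the paper's displayed expression contains a sign typo).
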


\begin{proof}
  By the Gr\"onwall-Bellman Comparison Lemma, $\dot{w}_i \leq -w_i +
  \frac{1}{n}\sum_{k=1}^{n}\ceil{a_{ki}(0)}$ implies that
	\begin{align*}
	w_i(t) &\leq -w_i(0)e^{-t}+\frac{1}{n}\sum\nolimits_{k=1}^{n}\ceil{a_{ki}(0)}(e^{-t}-1) \\
	&\leq \max \Big\{ \frac{1}{n}\sum\nolimits_{k=1}^{n}\ceil{a_{ki}(0)}, w_{i}(0) \Big\}.
	\end{align*}
	Therefore if there exists at least one $i$ such that $\workOpti_i > \max\{\frac{1}{n}\sum_{k=1}^{n}\ceil{a_{ki}(0)}, w_i(0)\}$, then $w_i(t) \neq \workOpti_i$.
%
\end{proof}
This sufficient condition for failure to learn the optimal workload can also be stated as a necessary condition for learning the optimal workload. In other words, if $\lim_{t\to \infty}\bm{w}(t) = \workOpt$, then $\workOpt_{i} \leq \max\{ \frac{1}{n}\sum_{k=1}^{n}\ceil{a_{ki}(0)}, w_i(0)\}$ for all $i$.

While the average-appraisal work flow does not converge to the optimal equilibrium for strongly connected teams and general initial conditions, the following lemma describes two cases that do guarantee learning of the optimal workload.

\begin{lemma}[Convergence to optimal workload for average-appraisal work flow]
	\label{thm:convergence-indegree}
	Consider the ASAP model~\eqref{eq:dynamics} with average-appraisal work flow~\eqref{eq:work-degree}. The following statements hold.
	\begin{enumerate}
		\item\label{thm:convergence-indegree:p1} If $A_{0}$ is row-stochastic, irreducible, with strictly positive diagonal, $\bm w(0)\in\mathrm{int}(\Delta_n)$, and $\workOpt =\frac{1}{n}\ones[n]$, then $\lim_{t\to\infty}(A(t),\bm w(t)) = (A^*,\frac{1}{n}\ones[n])$ where $A^*$ has the same zero/positive pattern as $A_{0}$ and is doubly-stochastic with $ \frac{1}{n}(A^*)^\transpose\ones[n] = \frac{1}{n}\ones[n]$;
		
		\item\label{thm:convergence-indegree:p2} if $A_{0}>0$ is row-stochastic and $\bm w(0)\in\mathrm{int}(\Delta_n)$,
		then $\lim_{t\to\infty}(A(t),\bm w(t)) = (A^*,\workOpt)$ where $A^*> 0$ and $\workOpt = \frac{1}{n}(A^*)^\transpose\ones[n]$.
	\end{enumerate}
\end{lemma}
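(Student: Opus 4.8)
The plan is to build a single Lyapunov function, in the spirit of the performance-entropy function of Lemma~\ref{thm:storageV}, but with the appraisal entropy reweighted so that its derivative telescopes against the average-appraisal work flow. For a reference matrix $A^*$ with the same zero/positive pattern as $A_0$, row-stochastic, and satisfying $\frac{1}{n}(A^*)^\transpose\ones[n]=\workOpt$, I would consider
\[
V(A,\bm w)=\subscr{\mathcal{H}}{tot}(\workOpt)-\subscr{\mathcal{H}}{tot}(\bm w)-\frac{1}{n}\sum_{i=1}^{n}\sum_{k:(i,k)\in\mathcal{E}(A_0)}a_{ik}^{*}\ln\Big(\frac{a_{ik}}{a_{ik}^{*}}\Big),
\]
where $\workOpt$ is the equal-performance workload of Theorem~\ref{thm:equilibria}\ref{thm:equilibria-pair}. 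Both terms are nonnegative (the first by strict concavity of $\subscr{\mathcal{H}}{tot}$, the second by Jensen's inequality and row-stochasticity of $A^*$), so $V\ge 0$ with equality iff $\bm w=\workOpt$ and $A=A^*$. The essential point, and the reason the uniform weight $1/n$ is chosen, is that the $(Ap(\bm w))_i$ terms produced by differentiating the relative-entropy term through $\dot a_{ik}/a_{ik}=p_k(w_k)-\sum_h a_{ih}p_h(w_h)$ cancel exactly against the corresponding terms coming from $\dot{\bm w}=-\bm w+\frac{1}{n}A^{\transpose}\ones[n]$, leaving $\dot V=p(\bm w)^{\transpose}(\bm w-\workOpt)$. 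Since $p(\workOpt)=p^*\ones[n]$ and $\ones[n]^{\transpose}(\bm w-\workOpt)=0$, this equals $(p(\bm w)-p^*\ones[n])^{\transpose}(\bm w-\workOpt)\le 0$, with equality iff $\bm w=\workOpt$ by strict monotonicity of each $p_i$.

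Before invoking LaSalle I must exhibit such a reference $A^*$ and secure a compact, positively invariant domain. For statement~\ref{thm:convergence-indegree:p2}, where $A_0>0$, the natural choice $A^*=\ones[n](\workOpt)^{\transpose}$ is strictly positive, row-stochastic, satisfies $\frac{1}{n}(A^*)^{\transpose}\ones[n]=\workOpt$, and has the same (complete) pattern as $A_0$; for statement~\ref{thm:convergence-indegree:p1}, where $\workOpt=\frac{1}{n}\ones[n]$, such a reference exists by realizability of a doubly-stochastic matrix with the given irreducible, positive-diagonal pattern (Sinkhorn scaling). Compactness then comes for free from the entropy term: on a sublevel set of $V$ the relative-entropy summand forces every appraisal in the support to stay bounded below by some $\delta>0$, and since $\dot w_i=-w_i+\frac{1}{n}\sum_k a_{ki}\ge -w_i+\delta$ this in turn keeps $\bm w$ bounded away from $\partial\Delta_n$. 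This is precisely the step where the hypothesis $A_0>0$ (or, for~\ref{thm:convergence-indegree:p1}, support control together with irreducibility) removes the need for the uniform-boundedness assumption that statement~\ref{p3:general} of Theorem~\ref{thm:general} had to postulate.

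With a compact positively invariant set in hand, LaSalle's Invariance Principle confines the trajectory to the largest invariant set inside $\{\dot V=0\}=\{\bm w=\workOpt\}$. On this set $p(\bm w)=p^*\ones[n]$, so the appraisal dynamics freeze ($\dot A=0$), while $\dot{\bm w}=0$ forces $\frac{1}{n}A^{\transpose}\ones[n]=\workOpt$; hence the invariant set is exactly the equilibria $(A^*,\workOpt)$ with $A^*>0$ row-stochastic and $\frac{1}{n}(A^*)^{\transpose}\ones[n]=\workOpt$, which establishes the stated characterization of the limit.

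The main obstacle I anticipate is upgrading LaSalle's set-convergence to convergence to a single matrix $A^*$, since a priori the invariant set is a positive-dimensional family of equilibria. I would close this gap using the conserved cycle constants of Theorem~\ref{thm:reducedOrder}\ref{thm:reducedOrder-complete-s1}, which confine the whole trajectory, and therefore its $\omega$-limit set, to the $(n-1)$-dimensional submanifold determined by $A_0$. Intersecting this submanifold with the $n-1$ independent constraints $\frac{1}{n}(A^*)^{\transpose}\ones[n]=\workOpt$ should isolate the equilibrium, so that the connected $\omega$-limit set reduces to a point; equivalently, in the reduced coordinates of Theorem~\ref{thm:reducedOrder}\ref{thm:reducedOrder-A2v} one argues that the degree-centrality map $\bm v\mapsto\frac{1}{n}\mcA(\bm v)^{\transpose}\ones[n]$ is injective on the relevant quotient. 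Verifying this isolation/injectivity rigorously is the delicate part of the argument.
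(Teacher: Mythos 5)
Your proposal is correct and, despite the different packaging, lands on essentially the same Lyapunov argument as the paper. For statement~\ref{thm:convergence-indegree:p1} your function is literally the performance-entropy function of Lemma~\ref{thm:storageV}, since $\workOpti_i=\frac{1}{n}$ there. For statement~\ref{thm:convergence-indegree:p2} the paper instead works in the reduced coordinates $(\bm v,\bm w)$ of Theorem~\ref{thm:reducedOrder} and uses $\bar V(\bm v,\bm w)=\sum_i\big(-\int_{\workOpti_i}^{w_i}p_i(x)\,dx-\workOpti_i\ln(v_i)+\tfrac{1}{n}\ln(\sum_k a_{ik}(0)v_k)\big)$; if you substitute $a_{ik}=a_{ik}(0)v_k/\sum_h a_{ih}(0)v_h$ into your KL term with reference $A^*=\ones[n](\workOpt)^{\transpose}$ and use $\tfrac{1}{n}(A^*)^{\transpose}\ones[n]=\workOpt$, your $V$ equals $\bar V$ up to an additive constant, and both computations give $\dot V=p(\bm w)^{\transpose}(\bm w-\workOpt)$. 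What your route buys: you stay in the original $(A,\bm w)$ coordinates; you make the reference matrix explicit ($\ones[n](\workOpt)^{\transpose}$ for~\ref{thm:convergence-indegree:p2}, a Sinkhorn-scaled doubly stochastic matrix for~\ref{thm:convergence-indegree:p1} --- an existence point the paper leaves implicit); and your sublevel-set compactness argument is more explicit than the paper's, which only derives the bound $\bar V\geq\ln(a_{\min})$ and a lower bound on $\bm v$ before invoking monotone decrease of $\bar V$. What the paper's route buys is that the extra $\tfrac{1}{n}\ln(\sum_k a_{ik}(0)v_k)$ term makes the scale-invariance in $\bm v$ manifest. On the final step --- upgrading LaSalle's set-convergence to convergence to a single $A^*$ --- the paper is no more rigorous than you are: it simply notes that $\bm w=\workOpt$ forces $\dot{\bm v}=\zeros[n]$ and asserts a limit, so your proposed closure via the cycle constants and injectivity of $\bm v\mapsto\tfrac{1}{n}\mcA(\bm v)^{\transpose}\ones[n]$ modulo scaling addresses a gap the paper glosses over rather than one you have introduced.
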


\begin{proof}
	Regarding statement~\ref{thm:convergence-indegree:p1}, the storage function from~\eqref{eq:storageV} is a Lyapunov function for the given dynamics with assumption $\workOpt = \frac{1}{n}\ones[n] = \frac{1}{n}(A^*)^\top \ones[n]$. The Lie derivative $\dot{V}$ is 
	\begin{multline*}
	\dot{V}(A,\bm w)
	= - p(\bm w)^{\top}\dot{\bm w} - (\workOpt)^\top (A^*- A) p(\bm w) .
	\\
	=  p(\bm w)^{\top} \Big(\bm w - \frac{1}{n}A^\top\ones[n]- \frac{1}{n}(A^*)^\top \ones[n] + \frac{1}{n}A^\top\ones[n] \Big)  
	\\
	= p(\bm w)^{\top}(\bm w - \workOpt)
	\leq 0.
	\end{multline*}
%
	By Lemma~\ref{thm:storageV}, $V = 0$ if and only if $\bm{w} = \workOpt = \frac{1}{n}\ones[n]$ and $A=A^*$ such that $\frac{1}{n}(A^*)^\top \ones[n] = \frac{1}{n}\ones[n]$. 
	Therefore $\lim_{t\to \infty}(A(t),\bm w(t)) = (A^*,\frac{1}{n}\ones[n]) $ where $A_0$ and $A^*$ have the same zero/positive pattern.

	Regarding statement~\ref{thm:convergence-indegree:p2}, consider the reduced order dynamics~\eqref{eq:dynamics-reduced}, with $\tilde{p}(\bm v, \bm w) = \bm{w}^{\top} \mcA(\bm{v})  p(\bm{w})$ for shorthand. 
	Define the function $\map{\bar V}{\real_{>0}^{n} \times \mathrm{int}(\Delta_n)}{\real}$ as where
	\begin{multline*}
	\bar V(\bm v,\bm w) = \sum_{i=1}^{n} \bigg(  -\int\nolimits_{\workOpti_i}^{w_i} p_i(x)dx \\
	- \workOpti_i\ln(v_i) 
	+ \frac{1}{n}\ln\Big( \sum\nolimits_{k=1}^{n} a_{ik}(0) v_{k}\Big) \bigg).
	\end{multline*}
	First, we show that $\bar V$ is lower bounded. Second, we illustrate that $\bar V$ is monotonically decreasing for $\bm w \neq \workOpt$. Then this allows us to show convergence to an optimal equilibrium.
	
	Let $a_{\min} = \min_{i,j}\{a_{ij}(0)\}$. 
	From the proof of Lemma~\ref{thm:storageV}, $-\int\nolimits_{\workOpti_i}^{w_i} p_i(x)dx \geq 0$ for all $i$. Then $\bar V$ is lower bounded by
	\begin{align*}	\label{eq:lbBarV}
	\bar{V} &\geq -\sum_{i=1}^{n} \bigg( \workOpti_i\ln(v_i) + \frac{1}{n}\ln\Big( \frac{1}{a_{\min} \norm{\bm v}{1}} \Big) \bigg)
	\\&\geq \ln(a_{\min}) - \sum_{i=1}^{n} \workOpti_i \ln\Big( \frac{v_i}{\norm{\bm v}{1}} \Big) \geq \ln(a_{\min}).
	\end{align*}

	Now we show that $\dot{\bar V} \leq 0$.
	Define the function $\map{\bm u}{\real_{>0}^{n}}{\real_{>0}^{n}}$, where $\bm u(\bm v) = \diag(A_0\bm v)^{-1}$, which reads element-wise as $u_i(\bm v) = \sum\nolimits_{k=1}^{n} a_{ik}(0) v_{k}$. Using $A(t)= \mathcal{A}(\bm v(t))$ as in~\eqref{eq:v-to-A},
	then the rate of change of $\bm u$ is given by
	\begin{multline*}
		\dot{\bm u} = -\diag(\bm u)^{2} A_0\dot{\bm v} 
		= -\diag(\bm u)^{} \big(A p(\bm w)-\tilde{p}(\bm v,\bm w)\ones[n]\big).
	\end{multline*} 
	Plugging $\bm u$ into $\bar V$, the Lie derivative of $\bar{V}$ is
	\begin{multline*}
	\dot{\bar V}(\bm v, \bm w) 
	= -  p(\bm w)^\top\dot{\bm w} - ( \dot{\bm v}\oslash\bm v)^\top \bm \workOpt - \frac{1}{n}( \dot{\bm u}\oslash\bm u)^\top\ones[n] 
	\\
	= - p(\bm w)^{\top} (-\bm w+\frac{1}{n}A^\top \ones[n]) - ( p(\bm w) - \tilde p(\bm v,\bm w)\ones[n])^\top \workOpt \\- \frac{1}{n}(-A p(\bm w)+\tilde{p}(\bm v,\bm w)\ones[n])^\top \ones[n]
	\\
	=  p(\bm w)^\top \big(\bm w-\workOpt \big) + \tilde p(\bm v,\bm w)\big( \ones[n]^\top \workOpt - \frac{1}{n} \ones[n]^\top\ones[n]\big)
	\\
	= p(\bm w)^\top (\bm w - \workOpt) 
	\leq 0.
	\end{multline*}
	Since $\dot{\bar V} \leq 0$, implies that $\bar V(\bm v,\bm w) \leq \bar V(\bm v_0,\bm w_0) < \infty$, we can conclude that there exists some strictly positive constant $v_{\min}>0$ such that $\bm v \geq v_{\min}\ones[n]$.
	
	Note that $\dot{\bar V} = 0$ if and only if $\bm w = \workOpt$ by Lemma~\ref{thm:storageV}. Because $\bar V$ has a finite lower bound and is monotonically decreasing for $\bm w \neq \workOpt$, then as $t\to\infty$, $\bar V$ will decrease to the level set where $\bm w = \workOpt$. Then $\bm w = \workOpt$ implies $\dot{\bm w} = 0$ and $\dot{\bm v} = 0$. Therefore $\lim_{t\to \infty}(\bm v,\bm w) = (\bm v^*,\workOpt)$ such that $\workOpt = \frac{1}{n}\mathcal{A}(\bm v^*)^{\top}\ones[n] = \frac{1}{n}(A^*)^\top \ones[n]$.
%
\end{proof}

\section{Numerical Simulations}
\label{sec:fail2learn}
In this section, we utilize numerical simulations to investigate various cases of the ASAP model to illustrate when teams succeed and fail at optimizing their collective performance.


For all the simulations in this section, we consider performance functions of the form $p_i(w_i) = (\frac{s_i}{w_i})^{\gamma}$ for $\gamma\in(0,1)$ and all $i$, which satisfy Assumptions~\ref{ass:performance-smooth}-\ref{ass:performance-2}. Then the same optimal workload maximizes any choice of collective team performance we have introduced.

First, we provide an example of a team with a strongly connected appraisal network and strictly positive self-appraisal weights, i.e. satisfying the assumptions of Theorem~\ref{thm:general}\ref{p3:general}, to illustrate a case where the team learns the optimal work assignment.  Figure~\ref{fig:ex-eigvec} illustrates the evolution of the appraisal network and work assignment of the ASAP model~\eqref{eq:dynamics} with donor-controlled work flow~\eqref{eq:work-eigvec}. 
\begin{figure}[ht!]
	\centering
	\begin{tabular}{@{} r @{} p{0.95\linewidth-2\tabcolsep} @{}}
		$\workOpt$ &\begin{tabular}{c}\includegraphics[width=0.97\linewidth-2\tabcolsep]{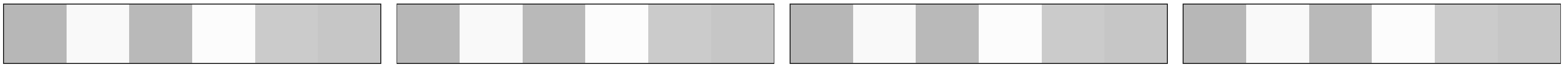}\end{tabular}
		\\
		$\bm w(t)$ &\begin{tabular}{c}\includegraphics[width=0.97\linewidth-2\tabcolsep]{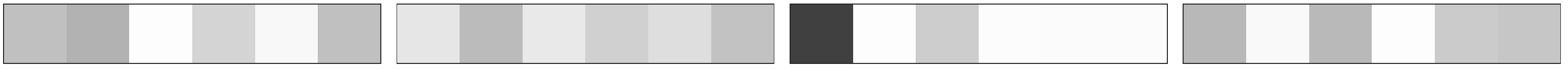}\end{tabular}
		\\
		$A(t)$ &\begin{tabular}{c}\includegraphics[width=0.97\linewidth-2\tabcolsep]{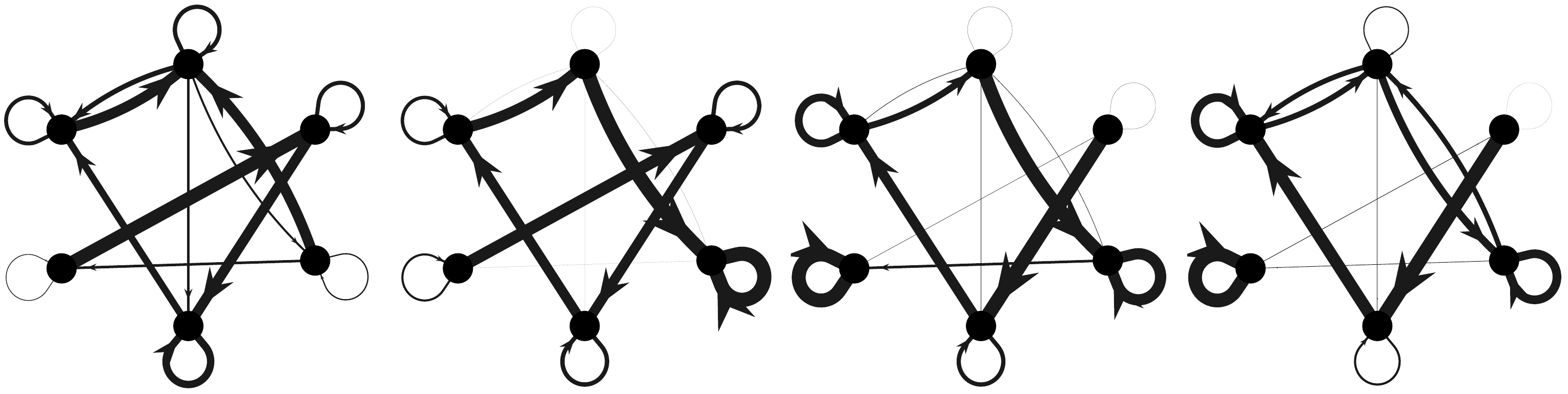}\end{tabular}
	\end{tabular}
	\caption{Visualization of the evolution of $\bm w(t)$ and $A(t)$ obeying the ASAP Model~\eqref{eq:dynamics} with donor-controlled work flow~\eqref{eq:work-eigvec}. For the work assignment vector, the darker the entry, the higher value it has. For the appraisal matrix, the thicker the edge is, the higher the appraisal edge weight is. The team's initial appraisal network is strongly connected with strictly positive self-appraisals, and is an example of a team that successfully learns the work assignment that maximizes the collective team performance.
	The plots pictured are at times  $t=\{0,1,10,1000\}$, from left to right.}
	\label{fig:ex-eigvec}
\end{figure}

\subsection{Distributed optimization illustrated with switching team members}
Next we consider another example of the ASAP model~\eqref{eq:dynamics} with donor-controlled work flow~\eqref{eq:work-eigvec}, where individuals are switching in and out of the team. Under the behavior governed by the ASAP model, only affected neighboring individuals need to be aware of an addition or subtraction of a team member, since the model is both distributed and decentralized. In this example, when individual $j$ is added to the team as a neighbor of individual $i$, $i$ allocates a portion of their work assignment to the new individual $j$. Similarly, if individual $j$ is removed, then $j$'s neighbors will absorb $j$'s workload. Let $k=1$, $k=2$, and $k=3$ denote the subteams from time intervals $t\in[0,5)$, $t\in[5,15)$, and $t\in[15,\infty)$, respectively. Then let $\subscr{\mathcal{H}}{tot}^{(k)}$ denote the collective performance for the $k$th subteam.  Figure~\ref{fig:add-subtract-node} illustrates the appraisal network topologies of each subteam and the evolution of the workload $\bm w(t)$ and normalized collective team performance $\subscr{\mathcal{H}}{tot}^{(k)}$.

\begin{figure}[h!]
	\begin{minipage}[t]{0.75\linewidth}
		\vspace{0pt}
		\includegraphics[width=\linewidth]{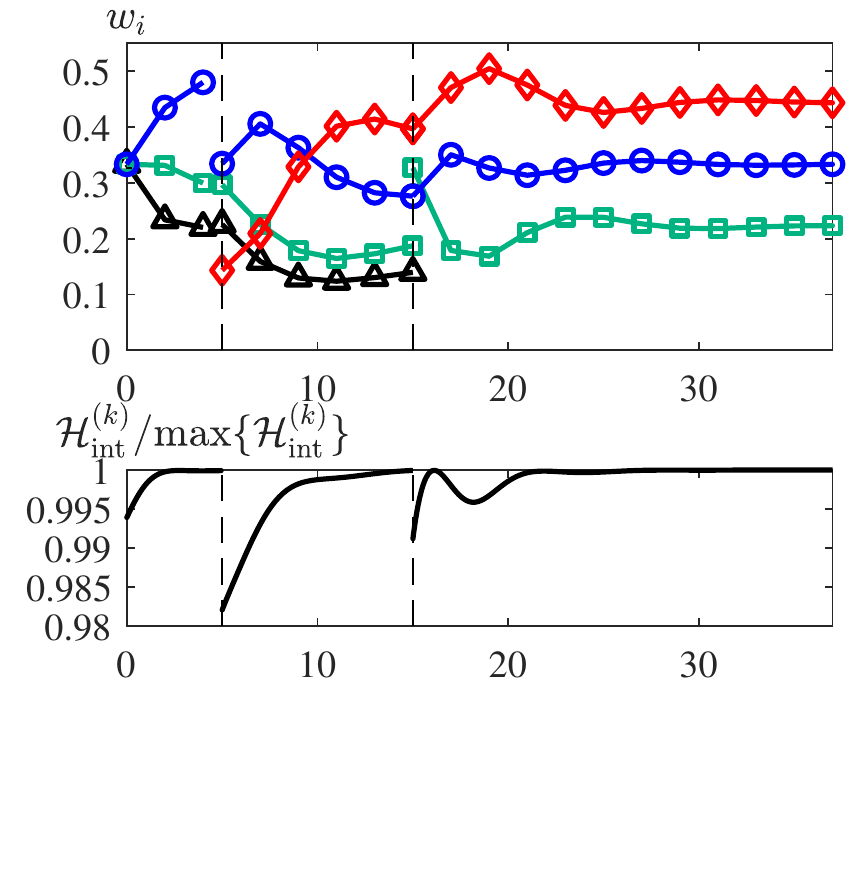}
	\end{minipage}
	\begin{minipage}[t]{0.2\linewidth}
		\vspace{4mm}
		\includegraphics[width=\linewidth]{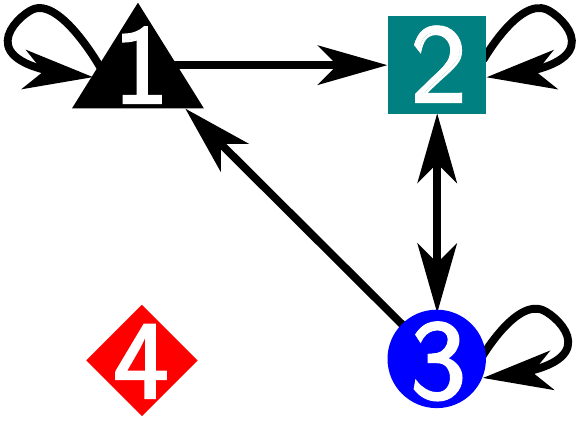}
		\\
		\vspace{0.1mm}
		\includegraphics[width=\linewidth]{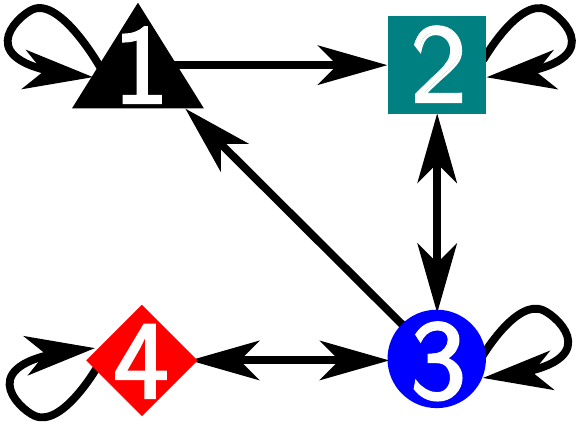}
		\\
		\vspace{0.1mm}
		\includegraphics[width=\linewidth]{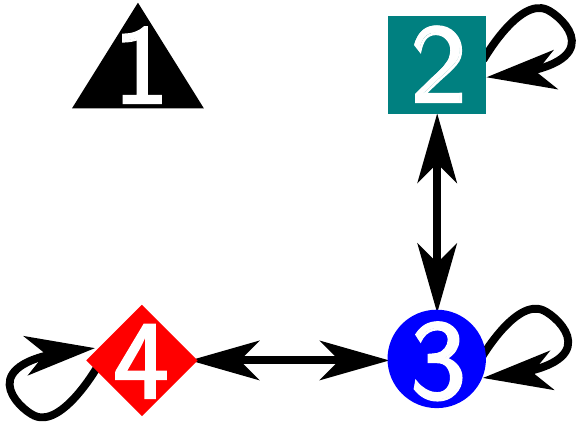}
		\vfill
	\end{minipage}
	\caption{Evolution of the ASAP model~\eqref{eq:dynamics} with donor-controlled work flow~\eqref{eq:work-eigvec} where individuals are being added and removed from the team. From top to bottom, the digraphs depict the topology of the team for $t\in[0,5)$, $t\in[5,15)$ and $t\in[15,\infty)$. At $t=10$, individual $4$ (in red diamond) is added to the team and individual $3$ gives a portion of their work to individual $4$. At $t=20$, individual $1$ (in black triangle) is removed from the team, and $1$'s work assignment is given to individual $2$. 
	}
	\label{fig:add-subtract-node}
\end{figure}

\subsection{Failure to learn}

\subsubsection*{Partial observation of performance feedback does not guarantee learning optimal work assignment}
Partial observation occurs when the appraisal network does not have the desired strongly connected property, resulting in team members having insufficient feedback to determine their optimal work assignment.
We consider an example of the ASAP model~\eqref{eq:dynamics} with donor-controlled work flow~\eqref{eq:work-eigvec} and reducible initial appraisal network $A_{0}$. 
Figure~\ref{fig:ex-fail-reducible-eigvec} illustrates how some appraisal weights between neighboring individuals approach zero asymptotically, resulting in the team not being capable of learning the work distribution that maximizes the collective team performance.


\begin{figure}[h!]
	\begin{subfigure}{\columnwidth}
		\centering
		\begin{tabular}{@{} r @{}@{} p{0.95\linewidth-2\tabcolsep} @{}}
			$\workOpt$
			&\begin{tabular}{c}
				\includegraphics[width=0.97\linewidth-2\tabcolsep]{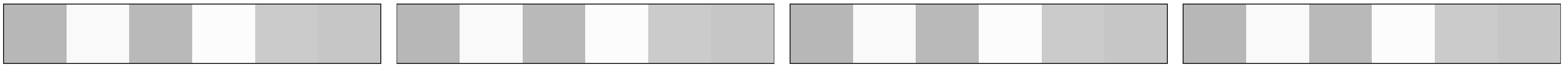}\end{tabular} 
			\\
			$\bm w(t)$
			&\begin{tabular}{c}
				\includegraphics[width=0.97\linewidth-2\tabcolsep]{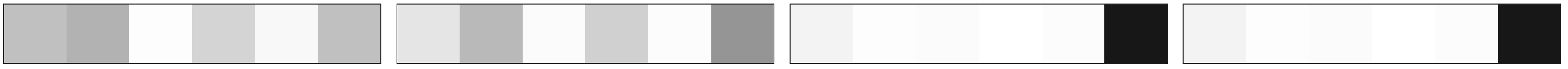}\end{tabular}
			\\
			$A(t)$
			&\begin{tabular}{c}
				\includegraphics[width=0.97\linewidth-2\tabcolsep]{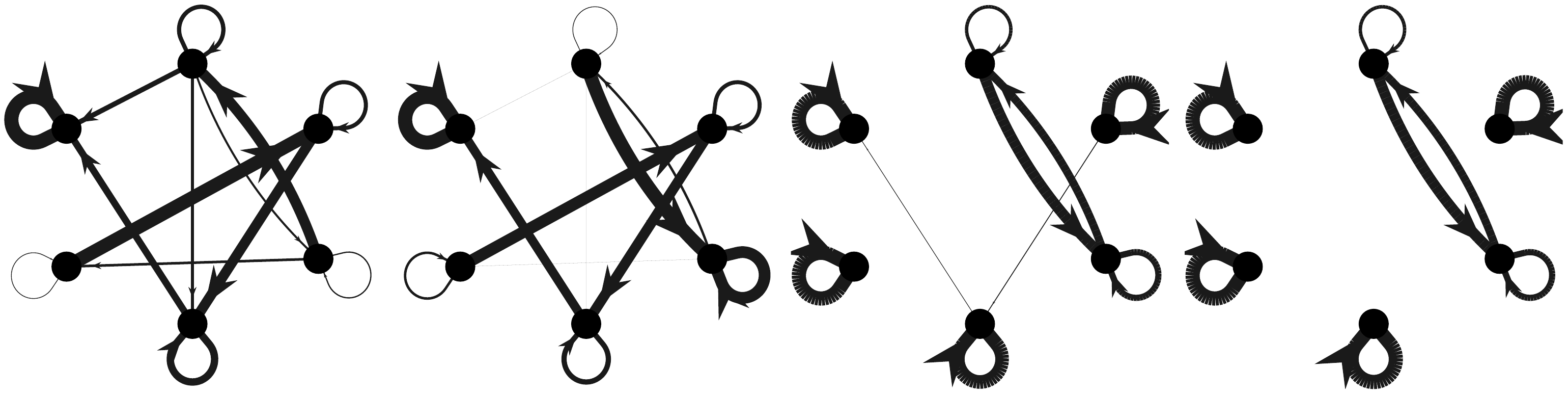}
			\end{tabular}
		\end{tabular}
		\caption{Visualization of the evolution of $\bm w(t)$ and $A(t)$ obeying the ASAP model~\eqref{eq:dynamics} with donor-controlled work flow~\eqref{eq:work-eigvec} and $A_0$ weakly connected.}
		\label{fig:ex-fail-reducible-eigvec}
	\end{subfigure}
	\begin{subfigure}{\columnwidth}
		\centering
		\begin{tabular}{@{} r @{}@{} p{0.95\linewidth-2\tabcolsep} @{}}
			$\workOpt$
			&\begin{tabular}{c}
				\includegraphics[width=0.97\linewidth-2\tabcolsep]{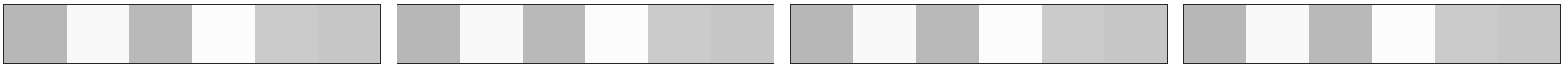}\end{tabular} 
			\\
			$\bm w(t)$
			&\begin{tabular}{c}
				\includegraphics[width=0.97\linewidth-2\tabcolsep]{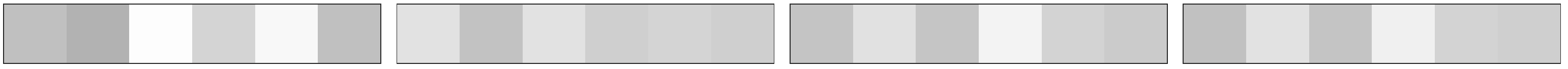}\end{tabular}
			\\
			$A(t)$
			&\begin{tabular}{c}
				\includegraphics[width=0.97\linewidth-2\tabcolsep]{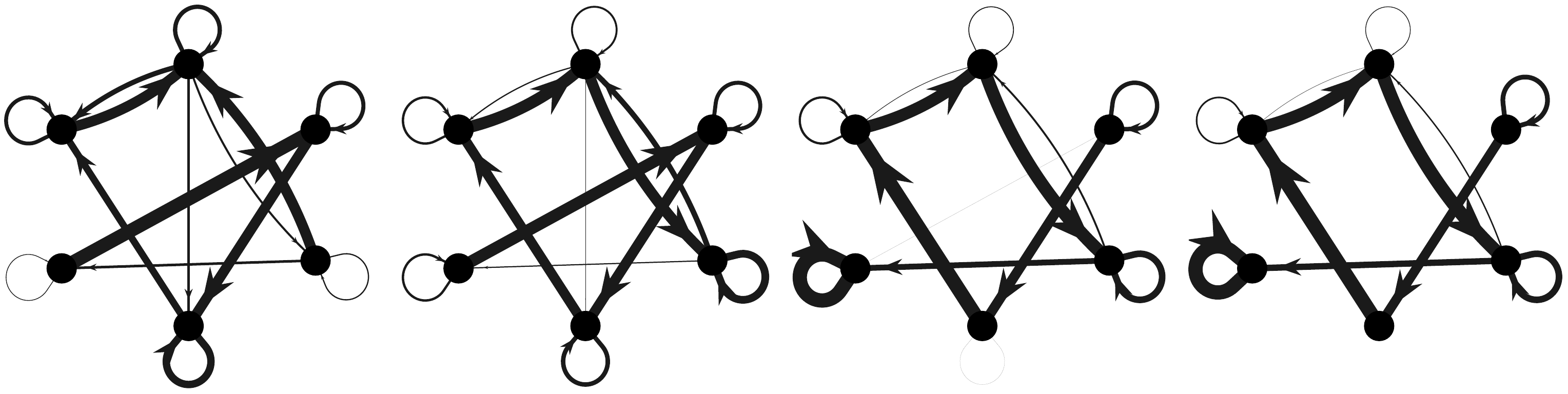}
			\end{tabular}
		\end{tabular}
		\caption{Visualization of the evolution of $\bm w(t)$ and $A(t)$ obeying the ASAP model~\eqref{eq:dynamics} with average-appraisal work flow~\eqref{eq:work-degree}. $A_0$ is strongly connected and $\workOpt$, $\bm w_0$, and $A_0$ satisfy the sufficient condition for failure to learn the optimal workload given by Lemma~\ref{thm:degree-centrality-necessary}. 
		}
		\label{fig:ex-neccCond}
	\end{subfigure}
	\caption{Various examples of cases where the team is unable to learn the work assignment that maximizes the collective team performance. For the work assignment vector, the darker the entry, the higher value it has. For the appraisal matrix, the thicker the edge is, the higher the appraisal edge weight is. 
	The plots pictured are at times $t=\{0,1,10,1000\}$, from left to right. 
}
	\label{fig:examples-fail}
\end{figure}

\subsubsection*{Average-appraisal feedback limits direct cooperation}
\label{sec:fail2learn-degree}

Figure~\ref{fig:ex-neccCond} is an example of a team obeying the ASAP model~\eqref{eq:dynamics} with average-appraisal work flow~\eqref{eq:work-degree}. Even if the team does not satisfy the sufficient conditions for failure from Lemma~\ref{thm:degree-centrality-necessary}, when individuals adjust their work assignment with only their average-appraisal as the input, the team may still not succeed in learning the correct workload to maximize the team performance.

\section{Conclusion}
This paper proposes novel models for the evolution of interpersonal
appraisals and the assignment of workload in a team of individuals engaged
in a sequence of tasks. We propose appraisal networks as a mathematical
multi-agent model for the applied psychological concept of TMS.  For two
natural models of workload assignment, we establish conditions under which
a correct TMS develops and allows the team to achieve optimal workload
assignment and optimal performance.  Our two proposed workload assignment
mechanisms feature different degrees of coordination among team
members. The donor-controlled work flow model requires a higher level of
coordination compared to the average-appraisal work flow and, as a result,
achieves optimal behavior under weaker requirements on the initial appraisal
matrix.



Possible future research directions include studying team's behavior when
individuals in the team update their appraisals and work assignments
asynchronously. The updates could be modeled using an additional contact
network with switching topology.  More investigation can also be done to
determine if it is possible to predict which appraisal weights in a weakly
connected network approach zero asymptotically, using only information on
the initial work distribution and appraisal values.

\section{Code Availability}
The source code is publicly available under {\normalsize\url{https://github.com/eyhuang66/assign-appraise-dynamics-of-teams}}.\hfill

\bibliographystyle{plainurl+isbn}
\bibliography{alias.bib,Main,FB}

\begin{IEEEbiography}
[{\includegraphics[width=1in,height=1.25in,clip,keepaspectratio]{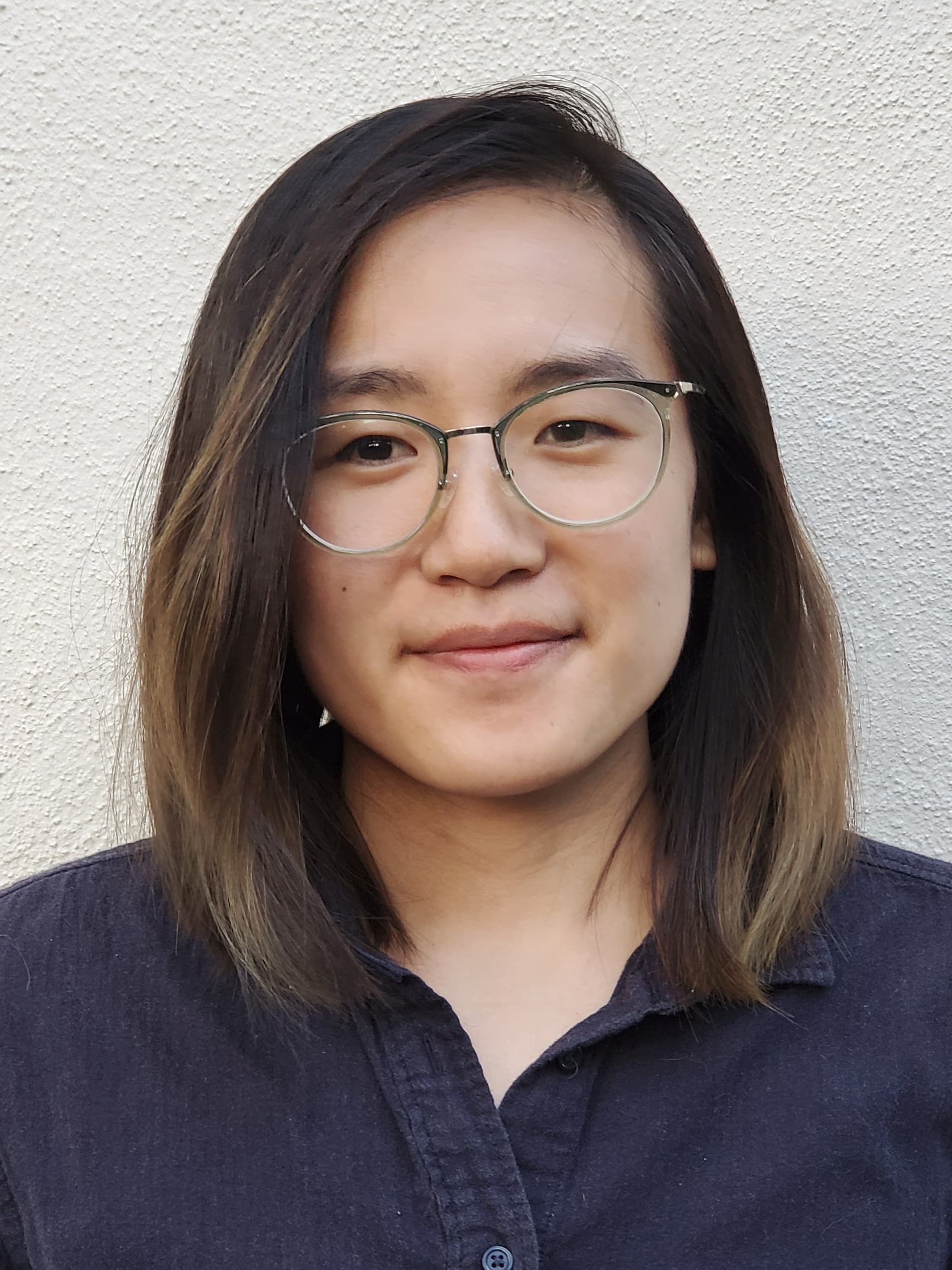}}]
{Elizabeth Y. Huang} received the B.S. degree in mechanical engineering from the University of California, San Diego, USA in 2016. She is currently working toward her Ph.D. in mechanical engineering from the University of California, Santa Barbara, USA. Her research interests include the application of control and algebraic graph theoretical tools for the study of networks of multi-agent systems, such as social networks, evolutionary dynamics, and power systems.
\end{IEEEbiography}

\begin{IEEEbiography}
	[{\includegraphics[width=1in,height=1.25in,clip,keepaspectratio]{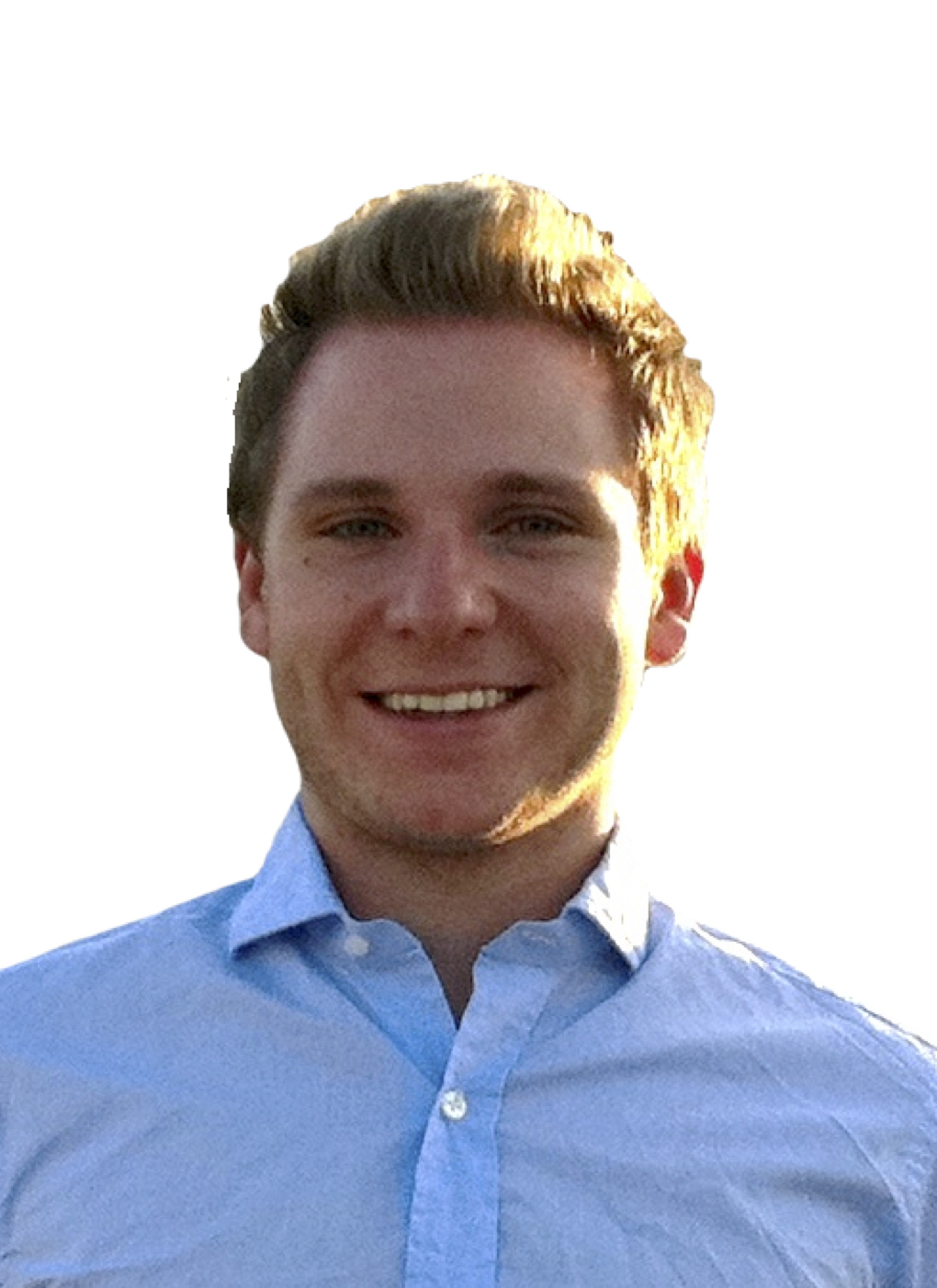}}]
	{Dario Paccagnan} is a Postdoctoral Fellow with the Mechanical Engineering Department and the Center for Control, Dynamical Systems and Computation, University of California, Santa Barbara.
	In 2018, Dario obtained a Ph.D. degree from the Information Technology and Electrical Engineering Department, ETH Z\"{u}rich, Switzerland. He received a B.Sc. and M.Sc. in Aerospace Engineering in 2011 and 2014 from the University of Padova, Italy, and a M.Sc. in Mathematical Modelling and Computation from the Technical University of Denmark in 2014; all with Honors.
	Dario was a visiting scholar at the University of California, Santa Barbara in 2017, and at Imperial College of London, in 2014.
	His interests are at the interface between control theory and game theory, with a focus on the design of behavior-influencing mechanisms for socio-technical systems. Applications include multiagent systems and smart cities. Dr. Paccagnan was awarded the ETH medal, and is recipient of the SNSF fellowship for his work in Distributed Optimization and Game Design. 
\end{IEEEbiography}

\begin{IEEEbiography}
[{\includegraphics[width=1in,height=1.25in,clip,keepaspectratio]{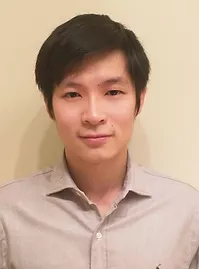}}]
{Wenjun Mei} is a postdoctoral researcher in the Automatic Control Laboratory at ETH, Zurich. He received the Bachelor of Science degree in Theoretical and Applied Mechanics from Peking University in 2011 and the Ph.D degree in Mechanical Engineering from University of California,
Santa Barbara, in 2017. He is on the editorial board of the Journal of Mathematical Sociology. His current research interests focus on network multi-agent systems, including social, economic and engineering networks, population games and evolutionary dynamics, network games and optimization.
\end{IEEEbiography}

\begin{IEEEbiography}
[{\includegraphics[width=1in,height=1.25in,clip,keepaspectratio]{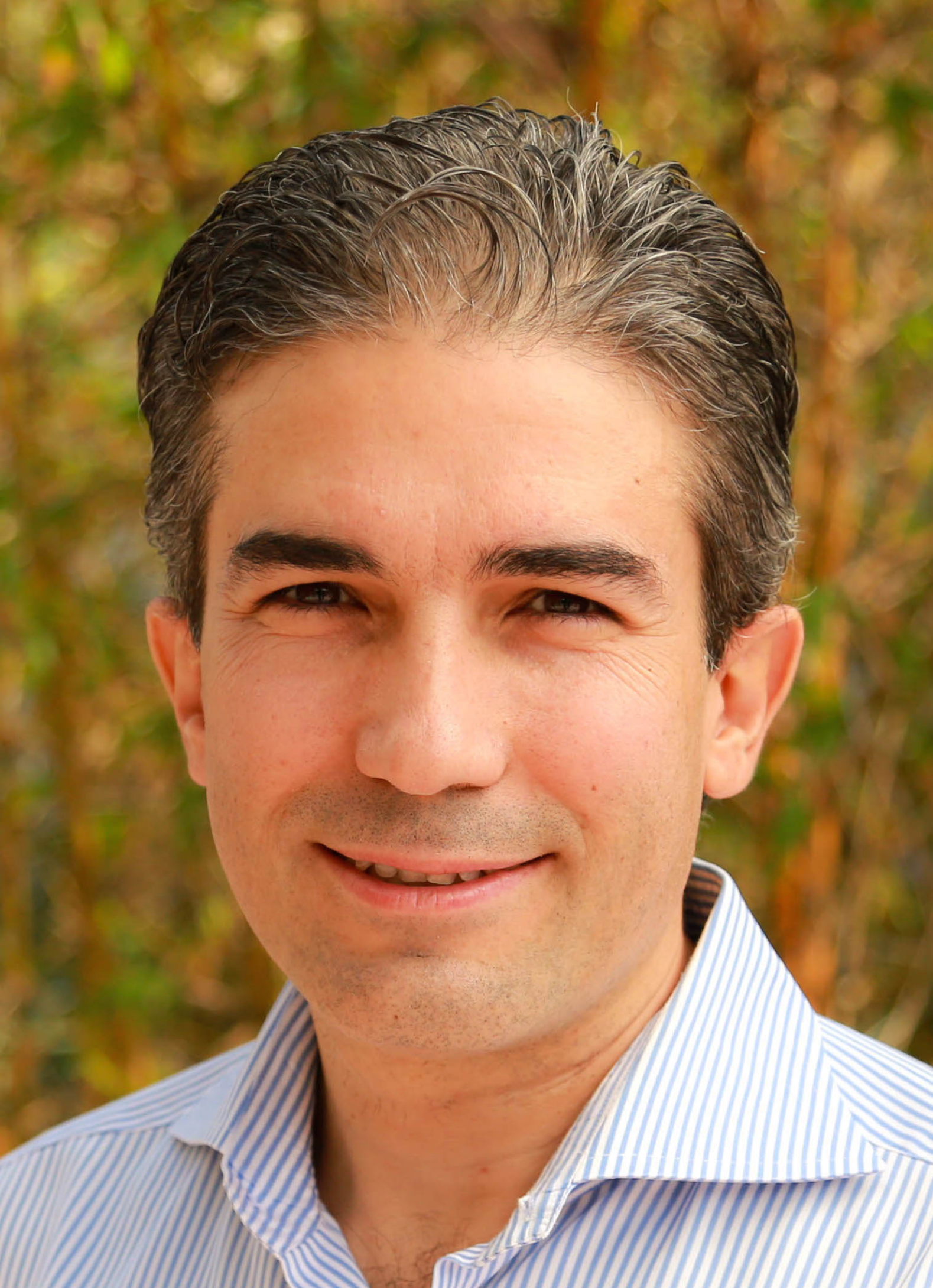}}]
{Francesco Bullo}(S'95-M'99-SM'03-F'10)
 is a Professor with the Mechanical Engineering Department and the Center
 for Control, Dynamical Systems and Computation at the University of
 California, Santa Barbara. He was previously associated with the
 University of Padova (Laurea degree in Electrical Engineering, 1994), the
 California Institute of Technology (Ph.D. degree in Control and Dynamical
 Systems, 1999), and the University of Illinois. He served on the editorial
 boards of IEEE, SIAM, and ESAIM journals and as IEEE CSS President. His
 research interests focus on network systems and distributed control with
 application to robotic coordination, power grids and social networks. He
 is the coauthor of “Geometric Control of Mechanical Systems” (Springer,
 2004), “Distributed Control of Robotic Networks” (Princeton, 2009), and
 “Lectures on Network Systems” (Kindle Direct Publishing, 2019, v1.3). He
 received best paper awards for his work in IEEE Control Systems,
 Automatica, SIAM Journal on Control and Optimization, IEEE Transactions on
 Circuits and Systems, and IEEE Transactions on Control of Network
 Systems. He is a Fellow of IEEE, IFAC, and SIAM.  
\end{IEEEbiography}

\end{document}